\newcommand{\out}{\mathit{output}}
\newcommand{\occ}{\mathit{occ}}
\newcommand{\cons}{\mathit{cons}}
\newcommand{\ST}{\mathit{ST}}
\newcommand{\str}{\mathsf{str}}
\newcommand{\locus}{\mathsf{locus}}
\newcommand{\suffixtree}{\mathcal{T}}
\newcommand{\apex}{\mathit{apex}}
\newcommand{\rnk}{\mathit{rnk}}
\title{Data Structures for Range Sorted Consecutive Occurrence Queries}
\author{Waseem Akram}{Department of Computer Science and Engineering, Indian Institute of Technology, Kanpur, India}{akram@iitk.ac.in}{https://orcid.org/0009-0007-4960-8673}{}
\author{Takuya Mieno}{Department of Computer and Network Engineering, University of Electro-Communications, Chofu, Japan}{tmieno@uec.ac.jp}{https://orcid.org/0000-0003-2922-9434}{}
\authorrunning{Akram~and~Mieno} 
\keywords{string pattern matching, consecutive occurrences, range queries, suffix trees, segment trees, segments intersections, closed words}
\begin{document}

\maketitle

\begin{abstract}
  The \emph{string indexing problem} is a fundamental computational problem with numerous applications,
  including information retrieval and bioinformatics.
  It aims to efficiently solve the pattern matching problem: given a text $T$ of length $n$ for preprocessing
  and a pattern $P$ of length $m$ as a query,
  the goal is to report all occurrences of $P$ as substrings of $T$.
  Navarro and Thankachan~[CPM 2015, Theor. Comput. Sci. 2016] introduced
  a variant of this problem called the \emph{gap-bounded consecutive occurrence query},
  which reports pairs of consecutive occurrences of $P$ in $T$
  such that their gaps (i.e., the distances between them)
  lie within a query-specified range $[g_1, g_2]$.
  Recently, Bille et al.~[FSTTCS 2020, Theor. Comput. Sci. 2022] proposed
  the \emph{top-$k$ close consecutive occurrence query},
  which reports the $k$ closest consecutive occurrences of $P$ in $T$, sorted in
  non-decreasing	order of distance.
  Both problems are optimally solved in query time with $O(n \log n)$-space data structures.
  In this paper, we generalize these problems to the \emph{range query model},
  which focuses only on occurrences of $P$ in a specified \emph{substring} $T[a.. b]$ of $T$.
  Our contributions are as follows:
  \begin{itemize}
    \item We propose an $O(n \log^2 n)$-space data structure that answers the \emph{range} top-$k$ consecutive occurrence query in $O(m + \log\log n + k)$ time
      and can be built in $O(n\log^3 n)$ time;
      and
    \item We propose an $O(n \log^{2+\epsilon} n)$-space data structure that answers the \emph{range} gap-bounded consecutive occurrence query in $O(m + \log\log n + \out)$ time
      and can be constructed in $O(n\log^{5/2}n)$ time,
      where $\epsilon$ is a positive constant and $\out$ denotes the number of outputs.
  \end{itemize}
  As by-products, we present algorithms for geometric problems
  involving weighted horizontal segments in a 2D plane, which are of independent interest.
  Furthermore, we observe that consecutive occurrences are related to \emph{closed substrings} of a string.
\end{abstract}
\section{Introduction}\label{sec:intro}
The \emph{string indexing problem} involves constructing a data structure that efficiently supports \emph{pattern matching queries}.
Given a text $T$ of length $n$ and a pattern $P$ of length $m$,
the goal is to report all occurrences of $P$ as substrings of $T$.
This problem has numerous applications in the real world
including information retrieval and bioinformatics.
A typical example of such an index structure is the \emph{suffix tree}~\cite{Weiner73},
a compact trie of all suffixes of $T$.
When combined with perfect hashing~(e.g.,~\cite{FredmanKS82,Ruzic08}),
a suffix tree provides an $O(n)$-space data structure that answers pattern matching queries
in optimal $O(m + \occ)$ time for given pattern $P$,
where $\occ$ is the number of occurrences of $P$ in $T$.

In this work, we focus on variants of pattern matching queries
that consider the distances between the starting positions of every two occurrences of $P$ in $T$.
Keller et al.~\cite{KellerKL07} introduced a \emph{non-overlapping} pattern matching problem,
which reports only occurrences of $P$ separated by at least $m$ characters.
They proposed a data structure of size $O(n \log n)$ that reports the maximal sequence of non-overlapping occurrences of $P$ in $T$ sorted in textual order,
in $O(m + \out\cdot\log\log n)$ query time where $\out$ is the output size.
Cohen and Porat~\cite{CohenP09} later improved the query time to $O(m + \out)$.
Several subsequent studies extended their work~\cite{GangulyST20,HooshmandAKT21,GibneyMT23}.

In 2016, Navarro and Thankachan~\cite{NavarroT16} introduced another new variant of the pattern matching query
called the \emph{consecutive occurrence} query.
The query reports \emph{consecutive} occurrences of pattern $P$,
where each pair of occurrences has no other occurrences of $P$ between them
and the distance is within range $[g_1, g_2]$ specified by a query.
They designed an $O(n \log n)$-space data structure that supports the consecutive occurrence query in optimal $O(m + \out)$ time.
Subsequently, Bille et al.~\cite{Bille22} considered the \emph{top-$k$ close} consecutive occurrence query,
which reports (at most) $k$ consecutive occurrences of $P$ in 
non-decreasing order 
of distance, where $k$ is specified in a query.
They proposed an $O(n \log n)$-space data structure that answers the top-$k$ consecutive occurrence query in optimal $O(m + k)$ time. 
In the same work, they also presented two other trade-offs for the problem, each using $O(n)$ space, with query times of either $O(m+k^{1+\epsilon})$ or $O(m + k \log^{1+\epsilon} n)$.
Recently, Akram and Saxena~\cite{Akram24} revisited Bille et al.'s work
and proposed a \emph{simpler} data structure with the same time/space bounds, which is the basis of our proposed method in this paper.
Further, Bille et al.~\cite{Bille23} addressed an extension of the query that requires consecutive occurrences of two (possibly distinct) patterns $P_1$ and $P_2$
where their distances lie within range $[g_1, g_2]$.
They showed that the query can be answered in $\tilde{O}(|P_1|+|P_2|+n^{2/3}\occ^{1/3})$ time with $\tilde{O}(n)$ space,
and that any data structure using $\tilde{O}(n)$ space requires 
$\tilde{\Omega}(|P_1|+|P_2|+\sqrt{n})$ query time
unless the \emph{String SetDisjointness Conjecture}~\cite{GoldsteinKLP17} fails.
More recently, Gawrychowski, Gourdel, Starikovskaya and Steiner considered consecutive occurrence problems on grammar-compressed texts~\cite{GawrychowskiGSS23,Gawrychowski24}.
As we have mentioned, the \emph{consecutive} occurrence queries has attracted significant attention in recent years.

In this paper, we consider each variant of consecutive occurrence queries defined in~\cite{Bille22} and~\cite{NavarroT16}
within \emph{substrings} specified by queries.
The first one is the \emph{range top-$k$ consecutive occurrence} query
that reports the $k$ closest consecutive occurrences of pattern $P$ in \emph{substring} $T[a.. b]$, sorted by distance,
where the query input consists of pattern $P$, range $[a,b]$, and integer $k$.
The second one is the \emph{range gap-bounded consecutive occurrence} query
that reports all the consecutive occurrences of pattern $P$ in \emph{substring} $T[a.. b]$
where their distances lie within $[g_1, g_2]$,
where the query input consists of pattern $P$, range $[a, b]$ representing the substring of $T$, and gap range $[g_1, g_2]$.
The formal definitions of the problems are provided in Section~\ref{sec:pre}.
Our main results are as follows:
\begin{itemize}
  \item For the range top-$k$ consecutive occurrence query,
    we propose an $O(n\log^2 n)$-size data structure with a query time of $O(m + \log\log n + k)$~(Theorem~\ref{thm:topk}).
  \item For the range gap-bounded consecutive occurrence query,
    we propose an $O(n\log^{2+\epsilon} n)$-size data structure with a query time $O(m + \log\log n + \out)$,
    where $\epsilon$ is a positive constant~(Theorem~\ref{thm:gapbounded}).
\end{itemize}
Both of the data structures consist of the suffix tree~\cite{Weiner73} of $T$ 
and some appropriately chosen geometric data structures,
associated with heavy paths~\cite{Sleator83} of the suffix tree.
Although the high-level idea is similar to that of~\cite{NavarroT16},
we carefully employ non-trivial combinations of data structures
since our problems only require occurrences within the query substring and 
outputs needs to be sorted by distance unlike~\cite{NavarroT16}.
Further, as by-products, we develop algorithms for geometric problems
involving weighted horizontal segments in 2D plane, which are of independent interest~(Theorems~\ref{thm:geometrictopk} and \ref{thm:geometricgapbounded}).

Our data structures for the top-$k$ queries and the gap-bounded queries can be constructed in $O(n\log^3 n)$ time and $O(n \log^{5/2} n)$ time, respectively.

The main component of the construction algorithms is a simple method that efficiently computes sets of horizontal segments, which encode all possible consecutive occurrences in the string $T$ in a compact manner.
Additionally, we draw a connection between consecutive occurrences and the notion of closed substrings~\cite{Fici17} in Section~\ref{sec:closed_substrings}. We provide a compact representation for all closed substrings of a string $T$, whose size is $O(n \log n)$. Moreover, we also show the method for computing sets of horizontal segments can be adapted to compute the representation of closed substrings in $O(n \log n)$ time.    

Some of the results above appeared in a preliminary version of this paper~\cite{CPM2025paper}. As new contributions, the present article includes construction algorithms for the proposed data structures, as well as new insights into a relation between consecutive occurrences and closed substrings.

\subsection{Related Work}
\subsubsection*{Range Queries for Strings.}
Variants of range queries (a.k.a. internal or substring queries) have been proposed for various problems in string processing.
M\"akinen and Navarro~\cite{MakinenN07} proposed a data structure for a fundamental range query called the \emph{position-restricted substring searching query},
which reports the occurrences of pattern $P$ within a specified substring $T[a.. b]$.
Their solution requires $O(n\log^\epsilon n)$ space, and answers queries in $O(m + \log \log n + \occ)$ time.
Keller et al.~\cite{KellerKL07} introduced the range non-overlapping pattern matching query and
proposed a data structure of size $O(n\log n)$ that can answer queries in $O(m + k\log\log n)$ time.
Cohen and Porat~\cite{CohenP09} later improved these results to $O(n\log^\epsilon n)$ space and $O(m+\log\log n + k)$ query time.
Crochemore et al.~\cite{CrochemoreIKRTW12} proposed an alternative trade-off with $O(n^{1+\epsilon})$ space and $O(m+ k)$ query time.

Bille and G{\o}rtz~\cite{BilleG14} showed that several string problems, including the position-restricted substring searching query,
can be reduced to the \emph{substring range reporting} problem, and
provided an efficient solution for the problem.
One of the results from~\cite{BilleG14} is  
a data structure
of size $O(n\log^\epsilon n)$ that answers
the position-restricted substring searching query in optimal $O(m + \occ)$ time,
which is an improvement of  M\"akinen and Navarro's result~\cite{MakinenN07}.
Range queries have been considered not only for finding patterns in strings
but also for computing regularities in strings or compressing substrings
(see~\cite{Cormen22,Kociumaka16,CrochemoreIRRSW20,Abedin0PT20,AmirCPR20,BadkobehCKP22,MitaniMSH23,ShibataK24,KociumakaRRW24} and references therein).

\subsubsection*{Orthogonal Queries for Line Segments.}
Given a set $S$ of $N$ horizontal segments in the plane, the \emph{orthogonal segment intersection problem} involves designing an efficient data structure that can be used to find the segments intersected by a vertical query segment. Chazelle~\cite{Chazelle86FS} presented a solution that occupies $O(N)$ space and takes $O(\log N + K)$ time to answer a query, where $K$ is the output size. It is an important problem in computational geometry and has been studied in various settings (see~\cite{Rahul14,Shi05,Shi08}).

Rahul and Janardan~\cite{Rahul14} studied a variant of the orthogonal segments intersection problem where the input set $S$ consists of weighted horizontal segments and queries to be answered are of the following type: given a vertical segment and an integer $k$, report the $k$ segments intersected by the query segment with largest weights. They proposed a general technique to solve top-$k$ variants of geometric problems efficiently, and solve this problem using $O(N\log^2 N)$ space and $O(\log^3 N + k)$ query time. 

\subsection{Paper Organization.}
The rest of the paper is organized as follows.
Section~\ref{sec:pre} includes notations, definitions and results used in this paper. 
Section~\ref{sec:naive} describes na\"{i}ve solutions to both variants. 
Section~\ref{sec:topk} presents an efficient algorithm to the top-$k$ problem (Definition~\ref{def:topk}).
Section~\ref{sec:boundedgap} describes an efficient solution for the gap-bounded variant (Definition~\ref{def:gapbounded}). 
Section~\ref{sec:closed_substrings} discusses the relation between concepts of consecutive occurrence and closed substrings, and provides a compact representation of closed substrings. Finally, Section~\ref{sec:conclusions} presents some concluding remarks.

\section{Preliminaries}\label{sec:pre}
\subsection{Notations}
Let $\Sigma$ be an ordered alphabet.
An element in $\Sigma$ is called a character.
An element in $\Sigma^\star$ is called a string.
The length of a string $T$ is denoted by $|T|$.
The empty string $\varepsilon$ is the string of length $0$.
If $T = xyz$ holds for strings $T, x, y$, and $z$,
then $x, y$, and $z$ are called a prefix, a substring, and a suffix of $T$, respectively.
They are called a proper prefix, a proper substring, and a proper suffix of $T$
if $x \ne T$, $y \ne T$, and $z \ne T$, respectively.
A string $b$ that is both a proper prefix and a proper suffix of $T$ is called a border of $T$.
For a string $T$ and an integer $i$ with $1 \le i \le |T|$,
we denote by $T[i]$ the $i$th character of $T$.
Also, for integers $i, j$ with $1 \le i \le j \le |T|$,
we denote by $T[i.. j]$ the substring of $T$ that starts at position $i$ and ends at position $j$.
For convenience, we define $T[i.. j] = \varepsilon$ if $i > j$.
We sometimes use notation $T[i.. ] := T[i.. |T|]$ for suffixes.
For strings $T$ and $w$,
we denote by $\occ(w, T) = \{i\mid T[i.. i+|w|-1] = w\}$
the set of starting positions of occurrences of $w$ in $T$.
For two positions $i, j \in \occ(w, T)$ with $i < j$,
the pair $(i, j)$ is called a consecutive occurrence of $w$ in $T$
if there is no element $k \in \occ(w, T)$ with $i < k < j$.
For a consecutive occurrence $(i, j)$ we call the value $j-i$ the distance of $(i, j)$.
We denote by $\cons(w, T)$ the set of consecutive occurrences of $w$ in $T$.
Note that $|\cons(w, T)| = |\occ(w, T)| - 1$ if $\occ(w, T) \ne \emptyset$.

Now we formally define two problems we tackle in this paper.
\begin{definition}[Top-$k$ query]\label{def:topk}
  The range top-$k$ consecutive occurrence query over string $T$ is,
  given a query $(P, [a, b], k)$ where $P$ is a pattern and $a, b, k$ are integers,
  to output the top-$k$ close consecutive occurrences $(i, j) \in \cons(P, T[a.. b])$,
  sorted in 
  non-decreasing order of distance.
\end{definition}
\begin{definition}[Gap-bounded query]\label{def:gapbounded}
  The range gap-bounded consecutive occurrence query over string $T$ is,
  given a query $(P, [a, b], [g_1, g_2])$ where $P$ is a pattern and $a, b, g_1, g_2$ are integers,
  to output the consecutive occurrences $(i, j) \in \cons(P, T[a.. b])$
  with $g_1 \le j-i \le g_2$,
  sorted in 
  non-decreasing order of distance.
\end{definition}
We give a concrete example for the top-$k$ query in Figure~\ref{fig:ex}.
\begin{figure}[tbh]
  \centering
  \includegraphics[width=0.8\linewidth]{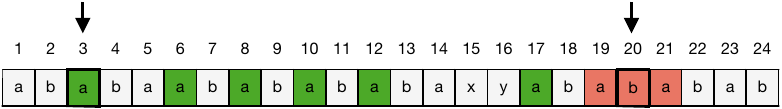}
  \caption{An example of the top-$k$ query for $k=4$. The occurrences of $P = \mathtt{aba}$ in the index range $[3, 20]$ are $3, 6, 8, 10, 12,$ and $17$. The occurrence of $P$ at index $19$ is invalid with respect to the range $[3, 20]$ as it ends at index $21 > 20$. Thus, the top-$4$ consecutive occurrences of $P$ in the index range $T[3.. 20]$ are $(6, 8), (8, 10), (10, 12)$, and $(3, 6)$.}
  \label{fig:ex}
\end{figure}

In what follows, we fix a string $T$ of length $n > 0$ arbitrarily.
In this paper, we assume the standard \emph{word-RAM model} with word size $w = \Omega(\log n)$.

\subsection{Algorithmic Tools}
\subsubsection*{Suffix Trees.}
The \emph{suffix tree} of string $T$ is a compacted trie for the set of suffixes of $T$~\cite{Weiner73}.
We denote by $\str(v)$ the \emph{path-string} obtained by concatenating the edge labels
from the root to node $v$ of the suffix tree of $T$.
If the last character of $T$ is unique (i.e., occurs in $T$ only once), 
then the suffix tree of $T$ has exactly $|T|$ leaves and
there is a one-to-one correspondence between the leaves and the suffixes of $T$.
More precisely, for each suffix $T[i..]$ of $T$,
there exists a leaf $\ell_i$ of the suffix tree of $T$
such that $\str(\ell_i) = T[i..]$, and vice versa.
Thus, we label such a leaf $\ell_i$ with integer $i$.
For a substring $w$ of $T$,
we define $\locus(w)$ as the highest node $u$ of the suffix tree of $T$
such that $w$ is a prefix of the string $\str(u)$.
The suffix tree of string $T$ over a general ordered alphabet can be constructed in $O(n\log n)$ time~\cite{Weiner73}.
Also, by using a perfect hashing,
one can compute in constant time the outgoing edge from any node
that starts with any specified character, after $O(n (\log \log n)^2)$ time preprocessing~\cite{Ruzic08}.

In the rest of this paper, we assume that the last character of $T$ is unique (one can achieve this property  by appending a new letter), and we denote by $\suffixtree$ the suffix tree of $T$. Namely, $\suffixtree$ has exactly $n$ leaves.

\subsubsection*{Heavy Path Decomposition of Tree.}
A {\em heavy path} of a tree is a root-to-leaf path in which each node has a size no smaller than the size of any of its siblings (the size of a node is the number of nodes in the subtree rooted at that node).
{\em Heavy path decomposition} is a process of decomposing a tree into heavy paths~\cite{Sleator83}. First, we find a heavy path by starting from the tree's root and choosing a child with the maximum size at each level. We follow the same procedure recursively for each subtree rooted at a node that is not on the heavy path, but its parent node is (on the heavy path). As a result, a collection of (disjoint) heavy paths is obtained. Some of its properties are:
\begin{itemize}
  \item each node $v$ belongs to exactly one heavy path,
  \item any path from the root to a leaf can pass through at most $\log_2 N$ heavy paths where $N$ is the size of the number of nodes in the tree,
  \item the number of heavy paths equals the number of leaves in the tree.
\end{itemize}
Heavy path decomposition of a rooted tree with $N$ nodes can be computed in $O(N)$ time~\cite{Sleator83}.
\subsubsection*{Segment Tree.}
The \emph{segment tree}~\cite{deBerg08LSI} is an important data structure in computational geometry. It stores a set of $N$ horizontal line segments in the plane such that the segments intersected by a vertical query line can be computed in $O(\log N + K)$ time, where $K$ is the number of reported segments. It can be built in $O(N\log N)$ space and time.
It is a balanced binary search tree of height $O(\log N)$
built over the $x$-coordinates of the segments’ endpoints. Each node stores a vertical slab $H(v):=[a,b)\times \mathbb{R}$ and a set $S(v)$ of segments, where $a$ and $b$ are real numbers. Let $l_1, l_2,\ldots l_m$ be the list of leaf nodes in the left-to-right order. The slab of a leaf node $l_i$
, for each $i=1,2,\ldots m-1$, 
is determined by the key values of $l_i$ and $l_{i+1}$, and the vertical slab of an internal node is the union of those of its children. Note that the vertical slabs of the nodes of any particular level form a partition of the plane. A segment $s$ is stored at each highest node $v$ such that $s \cap H(v)\ne \emptyset$ and no endpoint of $s$ lies inside $H(v)$. Given a real value $x$, the path from the root to the leaf $l_i$ whose vertical slab contains $x$ is referred to as the \emph{search path} for $x$. 
An important property of the segment tree is that a segment intersects the vertical line through $x$ if and only if it is stored at some node of the search path for $x$.

\subsubsection*{Fractional Cascading Technique.}
Chazelle and Guibas \cite{Chazelle86FC} introduced the \emph{fractional cascading technique}, a data structure technique which is useful in the scenarios where one needs to search an item in multiple sorted lists. It speeds up the search process.
We briefly describe the idea of the technique in the context where lists are associated with the nodes of a balanced binary tree, the reader may refer to \cite{Chazelle86FC} for full details.
Let $T$ be a balanced binary tree with $O(\log n)$ height, and each node $v$ stores a sorted list $L_v$ of comparable items
where $n$ is the combined size of all lists.
Given a root-to-leaf path in $T$, one can trivially find the position (rank) of an item in the list of each node on the path in $O(\log^2 n)$ time; performing a binary search at each node of the path. By employing fractional cascading technique, the search time can be reduced to $O(\log n)$. The technique introduces 
an \emph{augmented list} $A_v$
for each node $v$ in the tree: the augmented list $A_v$ consists of the items of $L_v$ and every fourth  
item from the list of each child; the list $A_v$ is also sorted. 
These augmented lists are built in a bottom-up fashion.
Each item $x$ in $A_v$ stores a constant amount of information:~(1) whether $x$ is from $L_v$ or from one of $v$’s children, (2) pointers to immediate neighbors of $x$ on either side in $A_v$ that belong to $L_v$, (3) a pointer to the appropriate item in $A_w$ for each child $w$ of $v$. If the position of $x$ in $A_v$ is known, one can find its position in the list of either child (of node $v$) in $O(1)$ time using the saved pointers.
Suppose we are to find the position (rank) of an item $\alpha$ in lists of nodes on a particular
root-to-leaf path $\pi = \langle u_0 = \mathit{root}, u_1, u_2, \ldots\rangle$.
A binary search is used to find the location of $\alpha$ in $A_{u_0}$
(and hence in $L_{u_0}$).
Then, using the pointers stored in $A_{u_0}$, the correct rank of $\alpha$ in the list $L_{u_1}$ of the next node $u_1$ on the path $\pi$ can be computed in $O(1)$ time.
Generally speaking, if the rank of $\alpha$ in $L_{u_j}$ is known, the rank of $i$ in $L_{u_{j+1}}$ can be obtained in $O(1)$ time using the pointers stored in $A_{u_j}$.
Thus, the total time spent would be $O(\log n)$.

\subsubsection*{Sorted Orthogonal Range Reporting Queries.}
A \emph{sorted range reporting} problem over an array $A$ of integers is,
given a query $([x_1, x_2], k)$ where $x_1, x_2, k$ are integers,
to report the $k$ smallest values in subarray $A[x_1.. x_2]$ in sorted order.
The following result is due to Brodal et al.~\cite{BrodalFGL09}. 
\begin{lemma}[\cite{BrodalFGL09}]\label{lem:brodal}
  There is a data structure of size $O(|A|)$
  that can answer any sorted range reporting query in $O(k)$ time.
  The data structure can be constructed in $O(|A|\log|A|)$ time.
\end{lemma}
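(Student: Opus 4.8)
The plan is to reduce a sorted range query to a best-first traversal of an implicitly defined min-heap, and then to remove the logarithmic overhead that this naturally incurs. First I would equip $A$ with a standard linear-space range-minimum structure answering $\mathrm{RMQ}$ queries in $O(1)$ time; such a structure is built in $O(|A|)$ time, comfortably inside the claimed $O(|A|\log|A|)$ construction budget. Given a query $([x_1,x_2],k)$, consider the conceptual binary tree $H$ whose nodes are intervals: the root is $[x_1,x_2]$, and a node identified with an interval $[l,r]$ is assigned the key $A[p]$, where $p$ is the position of a minimum of $A[l..r]$, and has children $[l,p-1]$ and $[p+1,r]$ (discarding empty ones). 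The tree $H$ is min-heap ordered, each node's children are obtained in $O(1)$ time from one $\mathrm{RMQ}$ call, and the multiset of keys appearing in $H$ is exactly $\{A[t] : x_1\le t\le x_2\}$. Hence reporting the $k$ smallest values of $A[x_1..x_2]$ in sorted order is the same as reporting the $k$ smallest keys of $H$ in sorted order.

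An easy first bound follows by a best-first search from the root of $H$: keep the frontier in a min-priority queue keyed by the interval's value $A[p]$, repeatedly extract the minimum interval, output its key, and insert its non-empty children. This emits the keys in non-decreasing order, but each of the $k$ steps performs a constant number of priority-queue operations on a queue of size $O(k)$, giving $O(k\log k)$ time. To push this down to $O(k)$, I would split the task into (i) computing the set $S$ of the $k$ smallest keys of $H$ with no regard to order, and (ii) emitting $S$ in sorted order. Part (i) is clean: $H$ is a binary min-heap with $O(1)$-time navigation, so Frederickson's optimal heap-selection procedure returns the $k$-th smallest key in $O(k)$ time, after which a single bounded depth-first pass (pruning wherever the key exceeds that threshold, and breaking ties by position) lists the $k$ smallest keys themselves in $O(k)$ additional time.

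The crux is part (ii). By the comparison lower bound, an arbitrary $k$-element set cannot be sorted in $o(k\log k)$ time, so the data structure must itself store enough order information that the final sort is essentially free — and realising this within $O(|A|)$ space is the real obstacle, the point where I expect most of the work to lie. Here I would follow Brodal et al.: augment $A$ with a hierarchy of canonical sub-intervals of geometrically decreasing lengths from which the sorted order of the elements inside each one is compactly and implicitly recoverable, and then sort the extracted set $S$ by a bootstrapped recursion that at the top level merges only a bounded number of already-sorted pieces and recurses on sub-instances of size about $\sqrt{|S|}$, using fractional-cascading-style pointers so that no search is repeated, so that the per-query cost telescopes to $O(k)$. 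Checking that this augmentation fits in $O(|A|)$ space and can be constructed in $O(|A|\log|A|)$ time, while still supporting the above, is the heart of the proof; once it is in place, the overall query bound $O(k)$ follows immediately by combining (i) with the telescoping sort of (ii).
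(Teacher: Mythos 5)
This lemma is not proved in the paper at all: it is imported verbatim from Brodal, Fagerberg, Greve and L\'opez-Ortiz~\cite{BrodalFGL09}, so there is no internal argument to compare yours against. Judged as a self-contained proof, your attempt has a genuine gap. The first half is fine and matches the standard route: the RMQ-defined interval tree is min-heap ordered with $O(1)$-time child generation, best-first search gives $O(k\log k)$, and Frederickson's heap selection yields the $k$ smallest elements of $A[x_1..x_2]$, \emph{unsorted}, in $O(k)$ time with $O(|A|)$ space. But the entire content of the cited theorem is the second half --- emitting those $k$ elements in non-decreasing order within the same $O(k)$ bound and $O(|A|)$ space --- and there your text reduces to ``follow Brodal et al.'': a ``hierarchy of canonical sub-intervals'', a ``bootstrapped recursion'' on instances of size about $\sqrt{|S|}$, and ``fractional-cascading-style pointers'' are named but never constructed, and neither the $O(|A|)$ space bound, the $O(|A|\log|A|)$ preprocessing bound, nor the telescoping of the per-query cost to $O(k)$ is verified. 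You correctly observe (via the comparison lower bound) that the order information must be precomputed and stored, which is exactly why this step cannot be waved through: appealing to the very result being proved makes the argument circular at its core.

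If you want to close the gap, you must actually specify the precomputed structure, e.g.\ how sorted information for geometrically shrinking canonical ranges is stored in few bits per element (ranks/implicit encodings rather than explicit sorted lists, to stay within $O(|A|)$ words), how an arbitrary query range decomposes into a bounded number of canonical pieces whose sorted prefixes can be merged in time linear in the output, and how the recursion bottoms out so that the total query cost is $O(k)$ rather than $O(k\log k)$ or $O(k\log\log k)$. Alternatively, for the purposes of this paper it is legitimate to do what the authors do and simply cite~\cite{BrodalFGL09}; but then the attempt should be presented as a citation plus a sketch of the unsorted case, not as a proof of the full statement.
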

A \emph{2D orthogonal range reporting} problem asks to represent a set of 2D planar points into an efficient data structure
such that all the points inside an orthogonal query rectangle can be reported efficiently.
Gao, He and Nekrich~\cite{Gao20} studied a variant of the problem where the output points are to be reported in increasing (or decreasing) order of one of the two coordinates.
They proposed an almost linear space solution to the problem that takes $O(\log \log N + K)$ time to answer a query, where $N$ is the size of the input set and $K$ is the number of reported points.
The output points are reported one by one in the sorted order. Their main result is as follows.

\begin{lemma}[\cite{Gao20}]\label{lem:sortedrangereporting}
  Given a set of $N$ points in the 2D rank space,
  one can preprocess in $O(N\sqrt{\log N})$ time the set so that the points lying inside a query rectangle $[a,b]\times [c,d]$ can be reported in sorted order
  in $O(\log \log N + K)$ time, where $K$ is output size. The space used by the data structure is $O(N\log^{\epsilon} N)$ words.
\end{lemma}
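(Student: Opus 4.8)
The plan is to separate the two coordinate constraints: handle the $y$-constraint by an index reduction together with a fast predecessor search, and handle the $x$-constraint by a range tree whose nodes can emit their points in sorted order. First I would pass to rank space and sort the $N$ points by $y$-coordinate, storing in an array $A[1.. N]$ the $x$-ranks in order of increasing $y$. A $y$-fast trie (or any deterministic predecessor structure) over the distinct $y$-values then converts the constraint $c \le y \le d$ into a contiguous index interval $[i,j]$ in $O(\log\log N)$ time, and since ``increasing $t$'' coincides with ``increasing $y$'', the query becomes: given $[i,j]$ and $[a,b]$, output $\{\, t\in[i,j] : A[t]\in[a,b] \,\}$ in increasing order of $t$. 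This is again a sorted two-dimensional reporting query, but now one coordinate is a dense permutation, which is what makes the remaining steps affordable.

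For the $x$-constraint I would build a range tree $\mathcal{R}$ over the value range $[1,N]$: each node $u$ owns a value interval $V_u$ and, conceptually, the list $B_u$ of those indices $t$ with $A[t]\in V_u$, kept sorted by $t$. The interval $[a,b]$ decomposes into canonical nodes lying on the two root-to-leaf search paths of $\mathcal{R}$, and inside each canonical list $B_u$ the indices in $[i,j]$ form a contiguous sub-block, whose endpoints I would locate with a single predecessor search at the top node and then $O(1)$ work per node via fractional-cascading pointers, as in the technique recalled above. To bring the space down to $O(N\log^{\epsilon}N)$ words rather than $O(N\log N)$, the lists $B_u$ are not stored explicitly; instead $\mathcal{R}$ is made wavelet-tree-like and equipped with a ball-inheritance structure, so that any individual coordinate is recovered in $O(\log^{\epsilon}N)$ time while the structural (bit-vector and cascaded-pointer) part stays near-linear.

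The main obstacle is that the sub-blocks coming from the $\Theta(\log N)$ canonical nodes have to be emitted in one global $t$-order, and a naive heap-merge of $\Theta(\log N)$ streams would cost $\Theta(\log N)$ per reported point, destroying the target additive $O(\log\log N)$ term. Resolving this is the technical heart: one combines a bootstrapping recursion in the style of Chan, Larsen and P\u{a}tra\c{s}cu for ordinary orthogonal range reporting with per-node structures that return their next output element in $O(1)$ time, so that at any moment only a word-sized number of streams is live and can be merged with an atomic heap, all the while verifying that sortedness survives each level of the recursion. A secondary obstacle is the $O(N\sqrt{\log N})$ preprocessing bound, which forbids the naive level-by-level $O(N\log N)$ construction and instead requires an integer-sorting-based bottom-up assembly of the cascaded lists. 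For the full construction and analysis I would defer to Gao, He and Nekrich~\cite{Gao20}.
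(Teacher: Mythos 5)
This lemma is not proved in the paper at all: it is an external result quoted verbatim from Gao, He and Nekrich~\cite{Gao20}, so there is no in-paper argument for your sketch to diverge from. Your outline (rank-space reduction, predecessor search for the $y$-range, a wavelet-tree/ball-inheritance style range tree for the $x$-range, a CLP-style bootstrapped merge to keep the additive term at $O(\log\log N)$, and integer-sorting-based construction for the $O(N\sqrt{\log N})$ bound) is a reasonable summary of the techniques in that cited work, and since you too ultimately defer to~\cite{Gao20} for the full construction, your treatment is essentially the same as the paper's.
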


\subsubsection*{Dynamic Fusion Nodes.}
A \emph{dynamic fusion node}~\cite{PatrascuT14} is a data structure for a dynamic set of integers
which supports operations on the set including insertion, deletion, predecessor, and successor.
Every query can be executed in $O(\log N/\log w)$ time where $N$ is the size of the dynamic set and $w$ is the machine word size. 
The data structure uses $O(N)$ space. 
If $N$ is close to $w$, the following holds:
\begin{lemma}[\cite{PatrascuT14}]\label{lem:fusionnode}
  If $N = w^{O(1)}$, we can maintain a dynamic set of $N$ integers
  supporting insertion, deletion, predecessor, and successor in $O(1)$ time using $O(N)$ space.
\end{lemma}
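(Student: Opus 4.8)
The plan is to derive the stated bound as a special case of the general dynamic fusion tree of P\v{a}tra\c{s}cu and Thorup~\cite{PatrascuT14}, which maintains a dynamic set of $N$ integers under insertion, deletion, predecessor, and successor in $O(1 + \log N / \log w)$ time using $O(N)$ space. Once that is in hand, the lemma follows immediately: if $N = w^{O(1)}$ then $\log N = O(\log w)$, so $\log N / \log w = O(1)$, while the space bound is unaffected. Hence the real work lies in the general construction, which I would present in two layers: a constant-size \emph{dynamic fusion node} primitive, and a shallow B-tree built on top of it.

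For the fusion node, I would fix a capacity $k = \Theta(w^{1/c})$ for a suitable constant $c$ (small enough that all the intermediate word operations below stay within one machine word), and keep up to $k$ of the stored $w$-bit integers in sorted order, packed into $O(1)$ words together with an auxiliary \emph{sketch word}. The classical Fredman--Willard observation is that the compacted binary trie on $k$ keys has at most $k-1$ branching bit positions, so projecting each key onto exactly those positions yields a $(k-1)$-bit \emph{sketch}, and all $k$ sketches together with separators fit in $O(k^2)$ bits, i.e.\ one word. A predecessor or successor query is then answered by computing the sketch of the query key, doing a single multiply-and-mask step that compares it against all $k$ stored sketches in parallel to locate its rank among them, and finally applying the standard correction that recovers the true predecessor after a possible collapse of sketches, by computing one longest common prefix and performing one further sketch lookup. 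The genuinely delicate part --- and the step I expect to be the main obstacle --- is making \emph{updates} run in $O(1)$ worst-case time: inserting or deleting a key can alter the set of branching positions, which changes every stored sketch at once, so one needs a representation in which the branching-position set, the sorted key array, and the sketch word can all be recomputed with a constant number of word operations (bit extraction, parallel most-significant-bit computation, and compression/expansion by multiplication). This is precisely the technical heart of~\cite{PatrascuT14}, which I would import wholesale; the one thing to check when doing so is that the same node also supports the operations needed to serve as a B-tree node --- splitting an overfull node and merging an underfull one --- in $O(1)$ time.

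Given the fusion node, I would build a B-tree in which every node holds between $k/2$ and $k$ keys and therefore has branching factor $\Theta(k) = w^{\Omega(1)}$. Its height is $O(\log_k N) = O(\log N / \log w)$, which is $O(1)$ when $N = w^{O(1)}$. A search, insertion, or deletion follows a single root-to-leaf path, performing one $O(1)$-time predecessor query per visited node, together with at most $O(\text{height}) = O(1)$ node splits or merges, each an $O(1)$-time fusion-node operation; hence every operation runs in $O(1)$ time overall. For the space bound, the B-tree has $O(N/k)$ nodes in total, each storing $O(k)$ words (its keys and $O(1)$ words of sketch data), which sums to $O(N)$ words, as claimed.
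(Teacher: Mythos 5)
Your proposal is correct and matches the paper's treatment: the paper does not prove this lemma either, but simply cites the P\u{a}tra\c{s}cu--Thorup dynamic fusion node with its $O(\log N/\log w)$-time, $O(N)$-space bounds and specializes it to $N = w^{O(1)}$, which is exactly your first paragraph. The further details you sketch (sketch words, constant-time updates, the B-tree layering) are internals of the cited construction that both you and the paper import wholesale rather than reprove.
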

We will use this lemma later in the case where $N \le \log |T| \in O(w)$.

\section{\texorpdfstring{Na\"{i}ve}{Naive} Solutions}\label{sec:naive}
In this section, we present a data structure that answers top-$k$ queries in near-optimal time. The data structure also supports gap-bounded queries in the same time bound. However, the space complexity is $O(n^2\log^{\epsilon} n)$.

We first build the suffix tree $\mathcal{T}$ over the text $T$ of length $n$.
Recall that the labels of leaves in the subtree rooted at a node $u$ correspond to the occurrences of $\str(u)$.
Let $C(u)$ denote the set of consecutive occurrences of $\str(u)$. For each $(i,j) \in C(u)$, we define a $2$-dimensional point $(i, j-i)$. Let $\mathsf{P}(u)$ be the set of all such $2$-d points at node $u$.
We then preprocess the set $\mathsf{P}(u)$, for each $u\in \mathcal{T}$, into a data structure that supports sorted range reporting queries using Lemma~\ref{lem:sortedrangereporting}.
The resulting data structure supports both types of queries.

Firstly, we estimate the space complexity.
The size of the set $C(v)$ of consecutive occurrences is one less than the number of leaves in the subtree rooted at node $v$.
In the worst case (when $\mathcal{T}$ is left-skewed or right-skewed), the total size of all sets $C(v)$, $v\in \mathcal{T}$, is $O(n^2)$.
As the set $\mathsf{P}(v)$ contains one point for each consecutive occurrence in the set $C(v)$, $|\mathsf{P}(v)|=|C(v)|$ holds.
Consequently, $\sum_{v\in \mathcal{T}} |\mathsf{P}(v)| = \sum_{v\in \mathcal{T}} |C(v)| = O(n^2)$.
The suffix tree $\mathcal{T}$ occupies $O(n)$ space.
The sorted range reporting structure representing the set $\mathsf{P}(v)$ uses $O(|\mathsf{P}(v)|\log^{\epsilon}|\mathsf{P}(v)|)$ space due to Lemma~\ref{lem:sortedrangereporting}.
Thus, the total space used by the data structures associated with all nodes is $\sum_{v\in \mathcal{T}} O(|\mathsf{P}(v)|\log^{\epsilon}|\mathsf{P}(v)|) = O(n^2 \log^{\epsilon} n)$.
Therefore, the space complexity of the data structure is $O(n^2 \log^{\epsilon} n)$.

\subsubsection*{Top-$k$ Query Algorithm for Given $(P, [a, b], k)$:}
We first find the node $u=\locus(P)$ where the search for pattern $P$ in the suffix tree $\mathcal{T}$ terminates. Then, we find the $(k+1)$ points of $\mathsf{P}(u)$ in the rectangle $[a,b-|P|+1]\times (-\infty, +\infty)$ with smallest y-coordinates (i.e., smallest distances) using the range reporting data structure stored at node $u$.
Let $(i, j)$ be the consecutive occurrence corresponding to the (returned) point with the largest x-coordinate. If the copy of $P$ starting at index $j$ ends with a position greater than $b$, report the remaining $k$ consecutive occurrences as output.
Otherwise, we report the $k$ consecutive occurrences with smallest distances.
Finding $\locus(P)$ in the suffix tree $\mathcal{T}$ takes $O(|P|)$ time. If we are to find $k+1$ points with smallest y-coordinates in sorted order, the structure stored at a node $u$ takes $O(\log\log |\mathsf{P}(u)| + k + 1)$ time to answer a query. The additional step to scan the returned points and report $k$ consecutive occurrences takes $O(k+1)$ time. Thus, the total time needed to answer a top-$k$ query is $O(|P| + \log \log n + k)$.

\subsubsection*{Gap-bounded Query Algorithm for Given $(P, [a, b], [g_1, g_2])$:}
For gap-bounded query, the procedure is very similar.
Again, we first find the node $u = \locus(P)$ where the search for $P$ in the suffix tree $\mathcal{T}$ terminates.
Then, we query the data structure stored at node $u$ with the rectangle $[a,b-|P|+1]\times [g_1, g_2]$.
Let $S'$ be the set of the consecutive occurrences corresponding to the returned points.
Finally, we simply scan the set $S'$ and remove the consecutive occurrence $(i, j)\in S'$ with the largest first occurrence $i$
if $j > b-|P|+1$, in words,  the copy of $P$ at index $j$ does not fit entirely within the index range $[a, b]$.
The resulting set $S'$ is our output to the gap-bounded query.
The time spent to answer a gap-bounded query is $O(|P| + \log \log n + \out)$ where $\out$ denotes the number of outputs for the query. The analysis is very similar to that of top-$k$ query.
Thus, we have the following proposition.
\begin{proposition}\label{prop:naive}
  Given a given text $T$, one can build an $O(n^2\log^{\epsilon} n)$-space index that can be used to answer a top-$k$ query or a gap-bounded query in $O(|P| + \log\log n + \out)$ query time.
\end{proposition}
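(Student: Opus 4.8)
The plan is to build one global suffix tree $\suffixtree$ of $T$ and, at every node $u$, attach a \emph{sorted range reporting} data structure over the point set $\mathsf{P}(u) = \{(i,\,j-i) \mid (i,j)\in C(u)\}$, where $C(u)$ is the set of consecutive occurrences of $\str(u)$. The key observation is that the leaves in the subtree of $u$ are exactly the occurrences of $\str(u)$ in textual order, so a consecutive occurrence of $\str(u)$ is just a pair of textually adjacent such leaves; taking $x$-coordinate $=$ first position and $y$-coordinate $=$ distance lets a query restrict to occurrences inside $[a,b]$ (via the $x$-range $[a,\,b-|P|+1]$) and to a gap window (via the $y$-range), while asking for the points sorted by $y$ gives exactly the required ordering by distance.

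First I would establish the \textbf{space bound}: each $\mathsf{P}(u)$ has size $|C(u)| = |\occ(\str(u),T)| - 1$, which is one less than the number of leaves below $u$; summing over all nodes on a root-to-leaf skewed tree gives $\Theta(n^2)$ in the worst case, and since each point set is stored in the $O(N\log^\epsilon N)$-space structure of Lemma~\ref{lem:sortedrangereporting}, the total is $O(n^2\log^\epsilon n)$, dominating the $O(n)$ suffix tree. Next I would spell out the \textbf{query algorithms}. For a top-$k$ query $(P,[a,b],k)$: locate $u=\locus(P)$ in $O(|P|)$ time, then ask the structure at $u$ for the $k+1$ points with smallest $y$ inside $[a,\,b-|P|+1]\times(-\infty,\infty)$, in sorted order, in $O(\log\log n + k + 1)$ time; scan them and, if the occurrence with the largest first coordinate $(i,j)$ has $j > b-|P|+1$ (so the second copy of $P$ spills past $b$), discard it and report the remaining $k$, otherwise report the first $k$. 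For a gap-bounded query $(P,[a,b],[g_1,g_2])$: again find $u=\locus(P)$, query with $[a,\,b-|P|+1]\times[g_1,g_2]$, and apply the same single-correction step discarding the pair whose second occurrence fails to fit. In both cases the total time is $O(|P| + \log\log n + \out)$.

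The one point that needs care — and which I expect to be the main (minor) obstacle — is arguing \textbf{correctness of the ``$+1$ then discard one'' trick}: within the fixed $x$-range and $y$-window, the valid consecutive occurrences of $P$ in $T[a..b]$ coincide with the points of $\mathsf{P}(u)$ in the rectangle \emph{except} that at most one reported pair can be spurious, namely the one whose endpoint $j$ satisfies $j > b-|P|+1$. This is because among all occurrences of $P$ that start in $[a,\,b-|P|+1]$, at most one — the last one in textual order — can actually end beyond position $b$ (an occurrence starting at $i\le b-|P|+1$ ends at $i+|P|-1\le b$, so in fact the only bad case is when $j$ itself, the \emph{second} member of the pair, exceeds $b-|P|+1$), and discarding that single pair restores exactly $\cons(P,T[a..b])$ restricted to the window; so requesting one extra point guarantees we still have $k$ (resp. all) genuine answers. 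Once this is justified, the time and space analyses above are routine, and Proposition~\ref{prop:naive} follows.
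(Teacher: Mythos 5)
Your proposal is correct and follows essentially the same route as the paper's own proof: a sorted range reporting structure (Lemma~\ref{lem:sortedrangereporting}) over the points $(i,\,j-i)$ at every suffix-tree node, queried with the rectangle $[a,\,b-|P|+1]\times(-\infty,\infty)$ (resp.\ $[a,\,b-|P|+1]\times[g_1,g_2]$), with the same ``fetch $k+1$ and discard the at most one pair whose second occurrence spills past $b$'' correction and the same $O(n^2\log^{\epsilon} n)$ space accounting. Your explicit justification that only the pair with the largest first coordinate can be spurious is a welcome elaboration of a step the paper leaves implicit.
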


\section{Range \texorpdfstring{Top-$k$}{Top-k} Consecutive Occurrence Queries}\label{sec:topk}
In this section, we propose an algorithm that solves the top-$k$ query~(Definition~\ref{def:topk}) using subquadratic space.
Our approach is similar to those employed in \cite{NavarroT16,Bille22}.
The suffix tree $\mathcal{T}$ is decomposed using heavy path decomposition.
For each heavy path $h$ in the decomposition, we define a set of horizontal segments that compactly represents the sets $C(u)$ of consecutive occurrences associated with nodes $u\in h$.
We organize these segments into a data structure that supports top-$k$ queries for patterns $P$ with $\locus(P)$ on the path $h$.

\subsection{Solution}\label{sec:topk-DS}
Let $\mathcal{T}$ be the suffix tree built over the text $T$. The tree $\mathcal{T}$ is decomposed using the heavy path decomposition, which gives $n$ disjoint (heavy) paths. 
Let $h$ be a heavy path in the decomposition.
We denote by $\apex(h)$ the highest node in the path $h$. We say that a consecutive occurrence $(i, j)$ is \emph{present} at node $v$ if $(i,j)\in C(v)$. The following lemma is helpful in compactly representing the sets $C(v)$ of consecutive occurrences.
\begin{lemma}\label{obs:life-co}
  For each consecutive occurrence $(i, j)\in C(v), \text{ where } v\in h$, there exists two nodes $u$ and $w$ on the heavy path $h$ such that $(i, j)$ is present at each node of the subpath of $h$ joining $u$ and $w$ and is not present at any other node of $h$.
\end{lemma}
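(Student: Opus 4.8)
The plan is to characterize exactly which nodes $v$ of the heavy path $h$ satisfy $(i,j) \in C(v)$, and show this set of nodes forms a contiguous subpath. First I would observe a monotonicity-type fact along the path: as we move down a heavy path from $\apex(h)$ toward the leaf, the string $\str(v)$ only gets longer, so $\occ(\str(v), T)$ only shrinks (with respect to set inclusion, after aligning occurrences to the common prefix). Hence if a position $p$ is \emph{not} an occurrence of $\str(v)$ for some node $v$ on $h$, it is not an occurrence of $\str(v')$ for any descendant $v'$ of $v$ on $h$. This already tells us that for a fixed pair $(i,j)$, the set of nodes of $h$ at which both $i$ and $j$ are occurrences is a (possibly empty) prefix-closed-from-below subpath, i.e. an initial segment $v_{\text{top}}, \dots, v_{\text{mid}}$ of $h$ reading downward, where $v_{\text{top}}$ is the highest node at which both $i$ and $j$ occur and $v_{\text{mid}}$ is the lowest such node.

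The second and more delicate step handles the ``consecutive'' requirement: being a consecutive occurrence of $\str(v)$ means additionally that no $k \in \occ(\str(v), T)$ lies strictly between $i$ and $j$. Here I would argue that if $(i,j)$ is a consecutive occurrence at some node $v$ on $h$, then it remains consecutive at every descendant $v'$ of $v$ on $h$ (as long as $i$ and $j$ are still occurrences there): any potential intermediate occurrence $k$ of $\str(v')$ would in particular be an occurrence of the shorter string $\str(v)$, contradicting consecutiveness at $v$. Conversely, consecutiveness can be \emph{lost} as we go up the path, because a shorter string has more occurrences, one of which may fall between $i$ and $j$. Combining these two directions, the set of nodes of $h$ at which $(i,j)$ is a consecutive occurrence is an interval of $h$: it is the intersection of the ``both occur'' initial segment $[v_{\text{top}}, v_{\text{mid}}]$ with a ``consecutive'' final segment $[v_{\text{cons}}, \ell]$, and the intersection of two subpaths of a path is a subpath. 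Taking $u$ to be whichever of $v_{\text{top}}$ and $v_{\text{cons}}$ is lower, and $w = v_{\text{mid}}$, gives the claimed nodes; the subpath of $h$ joining $u$ and $w$ is exactly $\{v \in h : (i,j) \in C(v)\}$, and it is nonempty since we assumed $(i,j) \in C(v)$ for some $v \in h$.

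The main obstacle is the second step, making the ``consecutiveness is preserved downward, can only be destroyed upward'' argument fully precise, since one must be careful about the alignment between $\occ(\str(v), T)$ and $\occ(\str(v'), T)$ and about the edge case where $\str(v)$ and $\str(v')$ differ (so that an occurrence position of the longer string is a subset, after the obvious identification, of the occurrence positions of the shorter one — but one should double-check there is no off-by-one issue, since here occurrences are indexed by their \emph{starting} position, which is the same for a string and any of its prefixes). Once the two directions are stated carefully, the conclusion that the node set is a subpath is immediate from the fact that $h$ is a path and the node set is an intersection of two ``downward-closed'' / ``upward-closed'' intervals along it. I would also note in passing that $u$ may equal $w$ (the subpath can be a single node), which is fine for the statement.
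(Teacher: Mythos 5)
Your proposal is correct and takes essentially the same route as the paper: both arguments rest on the monotonicity of $\occ(\str(v),T)$ along the heavy path, so that the condition ``both $i$ and $j$ occur'' is closed upward while ``no occurrence lies strictly between $i$ and $j$'' is closed downward, forcing the set of nodes with $(i,j)\in C(v)$ to be a contiguous subpath. The paper packages this as a top-down scan (the pair persists until $\ell_i$ or $\ell_j$ branches off, and never reappears) plus an upward argument about the highest ancestor where intermediate occurrences vanish, which is exactly your interval-intersection argument in procedural form.
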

\begin{proof}
  Let $h =\langle \apex(h) = v_0, v_1, v_2,\ldots, v_t \rangle$, and suppose $(i, j)$ is present at node $v_d$ for some $d\in \{0, 1, \ldots, t-1\}$.  This implies that the leaves $\ell_i$ and $\ell_j$ are present in the subtree of $\mathcal{T}$ rooted at node $v_d$. As we move down from $v_d$ to $v_{d+1}$, one or both leaves, $\ell_i$ or $\ell_j$, might branch off from the heavy path $h$. If neither leaf $\ell_i$ nor $\ell_j$ is present in subtree rooted at $v_{d+1}$, the pair $(i, j)$ would not be part of the set $C(v_{d+1})$. Otherwise, $(i,j)$ would continue to exist at the next node $v_{d+1}$, i.e., $(i, j)\in C(v_{d+1})$. Note that if $(i,j)$ is not present at $v_{d+1}$, it will not appear in any descendant node of $v_{d+1}$ on the path $h$. This is because as we move down the path the existing leaves only disappear. See also Figure~\ref{fig:segment} for examples.
  \begin{figure}[tb]
    \centering
    \includegraphics[width=0.5\linewidth]{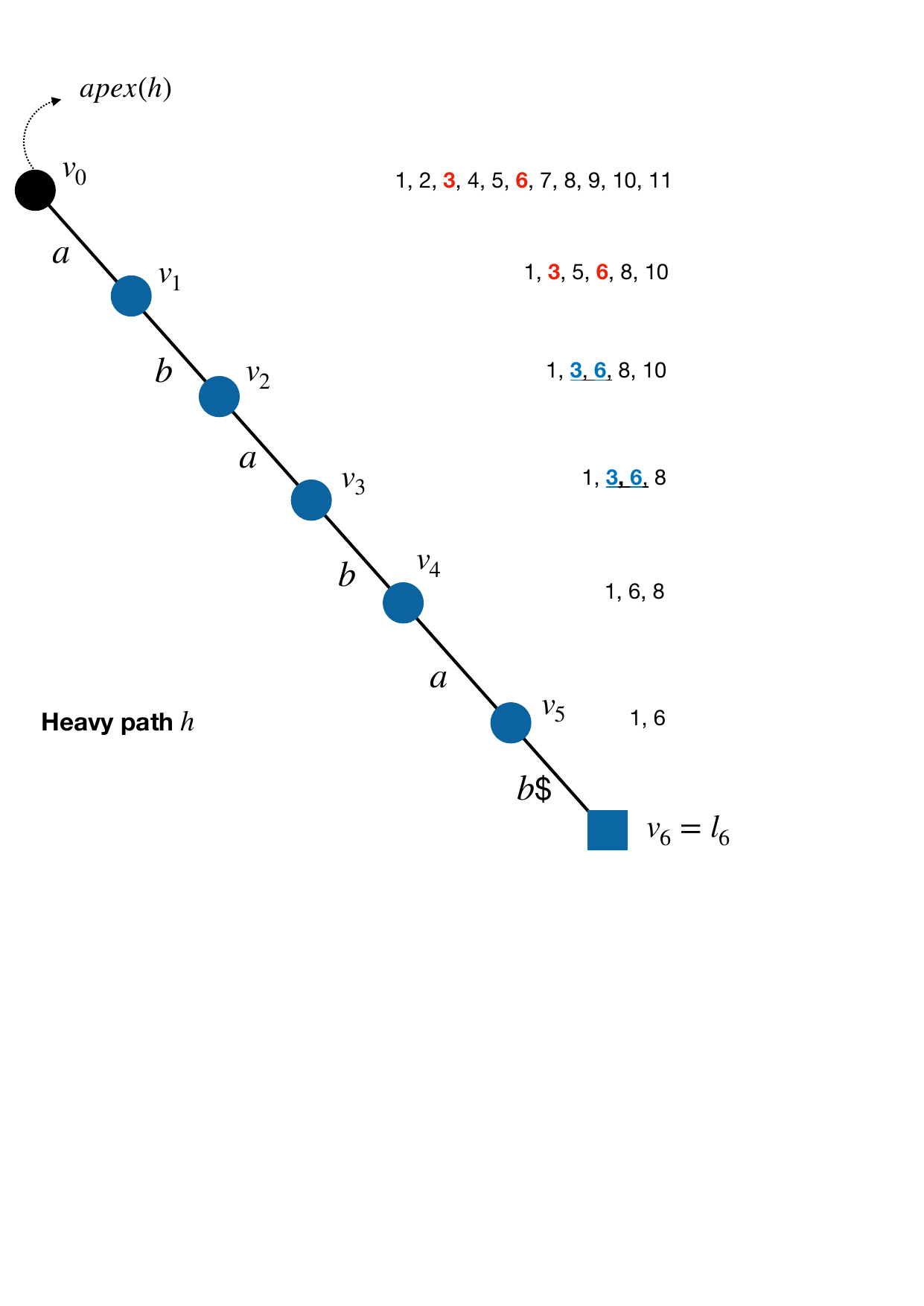}
    \caption{Illustration for Lemma~\ref{obs:life-co}. A heavy path $h$ in the suffix tree for $T = \mathtt{ababaababab\$}$ is shown in the figure. The occurrences $3$ and $6$ form a consecutive occurrence at node $v_2$; at each ancestor $v_0$ and $v_1$ they are separated by some indices. It remains a consecutive occurrence till node $v_3$ as leaf $l_3$ is no longer present in the subtree rooted at $v_4$. Thus, it is said that the consecutive occurrence $(3, 6)$ is alive on the subpath joining $v_2$ and $v_3$. The corresponding horizontal segment is $[2, 3] \times 3$; x-coordinates are determined by the depths of $v_2$ and $v_3$ and y-coordinate is determined by the distance of consecutive occurrence.}
    \label{fig:segment}
  \end{figure}

  Since $i$ and $j$ are two occurrences of $\str(v_d)$, the string $\str(v_0)$ must occur at these positions as $\str(v_0)$ is a prefix of $\str(v_d)$.
  If $\str(v_0)$ has no occurrence $g$ such that $i<g<j$, the pair $(i, j)$ is a consecutive occurrence of $\str(v_0)$, i.e., $(i, j)\in C(v_0)$. Otherwise, there are occurrences between $i$ and $j$, and the pair $(i, j)$ becomes a consecutive occurrence at the highest ancestor of $v_d$ where all occurrences between $i$ and $j$ disappear (i.e., the corresponding leaves branch off from the heavy path $h$).
\end{proof}
We denote by $\Pi_{i,j}^h$ the subpath of a heavy path $h$ corresponding to $(i, j)$, which is described in Lemma~\ref{obs:life-co}.
Let $u$ and $w$ be the starting and the ending nodes of path $\Pi_{i,j}^h$, respectively. 
Let $d(v)$ denote the depth of $v$ within heavy path $h$, i.e.,
$d(\apex(h)) = 0$ and $d(v) = d(v')+1$ where $v'$ is the parent of $v \ne \apex(h)$.
We create a horizontal segment $[d(u), d(w)]\times (j-i)$ in a 2D grid for each consecutive occurrence $(i, j)$ and associate $(i, j)$ as satellite data.
Let $I_h$ be the set of all such horizontal segments corresponding to path $h$.
By the construction of segments, the pair $(i, j)$ is a consecutive occurrence of pattern $P$ if and only if $\locus(P) \in h$ and $d(u) \le d(\locus(P))\le d(w)$ hold.
Also, since we are only interested in occurrences within substring $T[a.. b]$,
we will report such $(i, j)$ satisfying $a \le i < b-|P|+1$.
Thus, we can obtain the desired top-$k$ segments
by reporting $k+1$ segments\footnote{The reason for finding up to $k+1$ segments is the same as in Section~\ref{sec:naive}.} $[d(u), d(w)]\times (j-i)$ with the smallest y-coordinates
such that $d(u) \le d(\locus(P))\le d(w)$ and $a \le i < b-|P|+1$ hold.
Therefore, once $\locus(P), h$, and $d(\locus(P))$ in $h$ are computed,
the remaining process of a top-$k$ query can be reduced to the following geometric problem for $I = I_h$:
\begin{definition}[2D top-$k$ stabbing query with weight constraint]\label{def:geometrictopk} 
  A set $I = \{[l_1, r_1]\times y_1, \ldots, [l_{|I|}, r_{|I|}]\times y_{|I|}\}$ of horizontal segments in a 2D plane and weight function $w: I \to \mathbb{N}$ are given for preprocessing.
  The query is, given $(x, [w_1, w_2], k)$,
  to report the $k$ segments $s_1, \ldots, s_k$ with smallest y-coordinates in 
  non-decreasing order of y-coordinate
  such that $l_i \le x \le r_i$ and $w_1 \le w(s_i) \le w_2$ hold
  for every segment $s_i = [l_i, r_i] \times y_i$.
\end{definition}
We will later prove the following theorem in Section~\ref{sec:geometric}.
\begin{theorem} \label{thm:geometrictopk}
  There is a data structure of size $O(|I| \log |I|)$ that can answer the query of Definition~\ref{def:geometrictopk} in $O(\log |I| + k)$ time. Given set $I$, the data structure can be constructed in $O(|I| \log^2 |I|)$ time.
\end{theorem}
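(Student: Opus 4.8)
The plan is to follow the generic framework of Rahul and Janardan~\cite{Rahul14} for top-$k$ geometric queries, but instantiated with the right primitives so that the query time becomes $O(\log|I| + k)$ rather than $O(\log^3|I| + k)$. First I would build a segment tree on the $x$-coordinates of the segment endpoints of $I$, as recalled in the preliminaries. Each node $v$ of the segment tree stores a canonical set $S(v)$ of segments whose $x$-span crosses the slab $H(v)$ without an endpoint inside it; a stabbing query with abscissa $x$ reduces to inspecting the $O(\log|I|)$ nodes on the search path for $x$, and the segments in $I$ that are stabbed by the vertical line $X = x$ are exactly the disjoint union of the $S(v)$ over those nodes. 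Within each such node $v$, we additionally have the weight constraint $w_1 \le w(s) \le w_2$ and we want the segments sorted by $y$-coordinate. So the local subproblem at a node is: given the static set $S(v)$ with two attributes (weight, $y$-coordinate), preprocess it so that, given a weight interval $[w_1, w_2]$ and an integer $\kappa$, the $\kappa$ segments of $S(v)$ with weight in $[w_1, w_2]$ having smallest $y$-coordinate can be produced in sorted $y$-order incrementally. This is exactly a sorted 2D orthogonal range reporting query (map each segment to the point $(w(s), y(s))$, query rectangle $[w_1,w_2]\times(-\infty,+\infty)$, report in increasing $y$), so I would attach to each node $v$ the data structure of Lemma~\ref{lem:sortedrangereporting} on the point set $\{(w(s),y(s)) : s \in S(v)\}$. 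Since $\sum_v |S(v)| = O(|I|\log|I|)$, the total space is $O(|I|\log^{1+\epsilon}|I|)$ — I would absorb the $\log^\epsilon$ into the statement's $O(|I|\log|I|)$ only if I instead use a sorted range structure with strictly $O(N\log N)$ space; more likely the intended primitive here is Lemma~\ref{lem:brodal} after a rank-space reduction on weights at each node, which gives $O(N)$ space and $O(\kappa)$ query time per node. I would go with the Brodal et al.\ route to hit the clean bounds.

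The second, and genuinely delicate, step is to merge the per-node answers into a single globally $y$-sorted output of length $k$ without paying a $\log$ factor per reported element. Here I would run the standard \emph{atomic heap / tournament} approach used in top-$k$ frameworks: each of the $O(\log|I|)$ nodes on the search path acts as a "stream" that can deliver its next-smallest qualifying $y$-value in $O(1)$ amortized time (this is precisely what Lemma~\ref{lem:brodal} gives — successive smallest values in sorted order at $O(1)$ each, after $O(1)$ to issue the query). I maintain a priority queue holding one current candidate from each of the $O(\log|I|)$ active streams; repeatedly extract the global minimum, output it, and pull the next element from the stream it came from. The number of streams is $O(\log|I|) = O(\log n) = O(w)$, so by Lemma~\ref{lem:fusionnode} (dynamic fusion node, applicable since the set size is $O(\log|I|) \le O(w)$) every $\mathrm{insert}$, $\mathrm{delete}$, and $\mathrm{extract\text{-}min}$ costs $O(1)$. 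Hence producing the $k$ smallest costs $O(\log|I| + k)$: $O(\log|I|)$ to walk the search path and initialize the streams and the fusion node, then $O(1)$ amortized per reported segment.

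For correctness I would argue: (i) by the segment-tree search-path property, a segment is stabbed by $X=x$ iff it lies in exactly one $S(v)$ with $v$ on the search path, so the multiset union of the streams is exactly the set of segments stabbed by $X=x$ and passing the weight filter, with no duplicates; (ii) each stream emits its members in non-decreasing $y$-order (Lemma~\ref{lem:brodal}/\ref{lem:sortedrangereporting}); (iii) a standard exchange/induction argument on merging sorted streams via a min-priority-queue shows the merged output is globally sorted and its length-$k$ prefix is the true top-$k$. Ties in $y$ can be broken arbitrarily (the statement only demands non-decreasing order). For the construction time: building the segment tree and distributing segments to canonical nodes is $O(|I|\log|I|)$; at each node we rank-reduce weights (sort, $O(|S(v)|\log|S(v)|)$) and build the Lemma~\ref{lem:brodal} structure in $O(|S(v)|\log|S(v)|)$; summing over nodes, $\sum_v O(|S(v)|\log|S(v)|) = O(|I|\log^2|I|)$. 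The query-time bound is $O(\log|I|+k)$ as analyzed, and the space is $O(\sum_v |S(v)|) = O(|I|\log|I|)$.

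The main obstacle is the merge step: naively using a comparison-based heap over $\Theta(\log|I|)$ streams costs $O(\log\log|I|)$ per reported element and inflates the bound to $O(\log|I| + k\log\log|I|)$. The fix — exploiting that the number of streams is $O(w)$ so a fusion node gives $O(1)$-time priority-queue operations — is the crux, and I would spell out carefully that the relevant set sizes stay below the word size and that the amortized-$O(1)$ "next element" guarantee of Lemma~\ref{lem:brodal} composes correctly with issuing a fresh query per node only once. A secondary subtlety is the per-node weight filtering: I must ensure the rank-space reduction of weights at each node is consistent with a global weight interval $[w_1,w_2]$, which is handled by storing at the apex a sorted array of all weights and, per node, a monotone mapping to local ranks (an $O(\log|S(v)|)$ or $O(\log\log n)$ predecessor search per node on the path, which fits inside the $O(\log|I|)$ budget, or $O(1)$ per node if I thread fractional cascading through the weight rank arrays along the search path).
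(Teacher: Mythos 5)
Your proposal is correct and follows essentially the same route as the paper: a segment tree whose nodes keep their canonical segments in weight-sorted lists, the sorted range reporting structure of Lemma~\ref{lem:brodal} on the induced $y$-arrays, fractional cascading to locate $[w_1,w_2]$ in every list on the search path, and a min-heap implemented as a dynamic fusion node (Lemma~\ref{lem:fusionnode}) to merge the $O(\log|I|)$ per-node streams, yielding the same $O(|I|\log|I|)$ space, $O(\log|I|+k)$ query time, and $O(|I|\log^2|I|)$ construction time. Two small remarks: the paper does not rely on an online per-element guarantee from Lemma~\ref{lem:brodal} (which is stated there as a batch query with $k$ specified) but instead realizes your ``streams'' via per-node buffers refilled by exponentially doubling batch queries at amortized $O(1)$ per reported item, and your fallback of a per-node predecessor search for the weight ranks ($O(\log|S(v)|)$ or $O(\log\log n)$ per node over $O(\log|I|)$ nodes) would exceed the $O(\log|I|)$ budget --- the fractional-cascading alternative you also mention is the one that works and is what the paper uses.
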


\subsubsection*{Space Complexity.}
We have the following result regarding the total size of the sets of segments.
\begin{lemma}[\cite{Bille22}]\label{lem:h-size}
  The total number of horizontal segments for all heavy paths is $O(n\log n)$.
\end{lemma}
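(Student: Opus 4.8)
The plan is to count segments not by consecutive occurrence but by where the upper (shallower) endpoint of each segment sits on its heavy path, splitting the segments into two classes and bounding each by $O(n\log n)$. The only external facts needed are that any root-to-leaf path of $\suffixtree$ meets $O(\log n)$ heavy paths, and Lemma~\ref{obs:life-co}, which guarantees that for a fixed heavy path $h$ each consecutive occurrence present on $h$ contributes exactly one segment of $I_h$, spanning the contiguous subpath $\Pi_{i,j}^h$; let $\mathrm{top}(i,j,h)$ be its shallowest node. A subtle point to keep in mind is that one consecutive occurrence may be present on several heavy paths (one per heavy path that its ``lifespan'' in $\suffixtree$ crosses), so the bookkeeping must be done per segment; the split below is designed precisely so that each such lifespan contributes at most one segment whose top is an internal node of a heavy path, while all its other segments have their top at the apex of some heavy path strictly below.

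\textbf{Segments topped at an apex.} First I would bound $\sum_h \#\{(i,j)\in I_h : \mathrm{top}(i,j,h)=\apex(h)\}$. A segment with top at $\apex(h)$ is exactly a consecutive occurrence present at $\apex(h)$, so the $h$-th term equals $|C(\apex(h))| = |\occ(\str(\apex(h)),T)|-1$, which is one less than the number of leaves below $\apex(h)$. Summing over $h$ and exchanging the order of summation, this is at most $\sum_{\ell}\#\{\text{heavy-path apices that are ancestors of leaf }\ell\}$; the inner count equals the number of heavy paths on the root-to-$\ell$ path, hence is $O(\log n)$, and the total is $O(n\log n)$.

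\textbf{Segments topped at an internal node.} Next I would bound $\sum_h \#\{(i,j)\in I_h : \mathrm{top}(i,j,h)\neq\apex(h)\}$, equivalently $\sum_{u \text{ not an apex}} \#\{\text{segments with top }u\}$. If $\mathrm{top}(i,j,h)=u$ and $u$ is not the apex of $h$, then its parent $p$ also lies on $h$ and $(i,j)\in C(u)\setminus C(p)$. Since $\str(p)$ is a prefix of $\str(u)$, we have $\occ(\str(u),T)\subseteq \occ(\str(p),T)$, and $B_u := \occ(\str(p),T)\setminus\occ(\str(u),T)$ is exactly the set of positions whose leaves branch off $h$ at $u$. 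Being consecutive in $\occ(\str(u),T)$ but not in $\occ(\str(p),T)$ forces some element of $B_u$ to lie strictly between $i$ and $j$, so a pigeonhole argument over the gaps between adjacent elements of $\occ(\str(u),T)$ shows that the number of such pairs is at most $|B_u|$. It then remains to prove $\sum_{u}|B_u| = O(n\log n)$: each leaf $\ell$ lies in $B_u$ for exactly one node $u$ on each heavy path $h$ that the root-to-$\ell$ path enters (at its apex) and then leaves before reaching $\ell$ — namely $u$ is the heavy child of the node where the path departs $h$ — and there are $O(\log n)$ such heavy paths. Hence this class is $O(n\log n)$ as well, and adding the two bounds yields the claim.

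The hardest part will be the internal-node analysis: pinning down that $B_u$ is exactly the set of leaves branching off at $u$, justifying the pigeonhole bound ``number of new consecutive occurrences at $u$ is at most $|B_u|$'' (in particular handling elements of $B_u$ that fall outside every gap of $\occ(\str(u),T)$), and verifying the ``one $u$ per heavy path'' statement carefully so that $\sum_u |B_u|$ is genuinely $O(n\log n)$ and not larger.
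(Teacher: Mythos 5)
Your proof is correct. There is, however, no in-paper argument to compare it with: the paper imports this lemma verbatim from Bille et al.~\cite{Bille22}, and the closest in-paper material is the construction in Section~\ref{sec:topk-DS}, whose accounting ($O(n_h)$ segments per heavy path and $\sum_h n_h = O(n\log n)$ because each leaf lies below the apices of only $O(\log n)$ heavy paths) reflects the standard proof. Your split by the top endpoint of each segment is essentially that argument in a different arrangement: your apex-topped class is the $n_h-1$ consecutive occurrences already present at $\apex(h)$, and your internal-topped class, bounded by the witness-in-$B_u$ pigeonhole, is exactly the ``each departing leaf creates at most one new consecutive occurrence'' charge (the paper's construction makes the same point when deleting a leaf $k$ merges its neighbours $(i,k),(k,j)$ into $(i,j)$). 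The differences are organizational rather than substantive: you aggregate $\sum_u |B_u|$ globally via the $O(\log n)$ heavy-path-crossing bound instead of per path (one could equally note $\sum_{u\in h}|B_u|\le n_h$, giving $|I_h|\le 2n_h-1$), and your injective-witness argument replaces the usual merging argument; both are sound, and your version has the merit of being fully self-contained. One wording nit: the leaves of $B_u$ branch off the heavy path at $u$'s parent $p$ (they lie below $p$ but not below $u$), not ``at $u$''; your formal definition $B_u=\occ(\str(p),T)\setminus\occ(\str(u),T)$ is the correct one and is what your counting actually uses.
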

By Theorem~\ref{thm:geometrictopk} and Lemma~\ref{lem:h-size}, the total size of our data structure is $O(n + \sum_{h} |I_h| \log |I_h|) = O(n \log^2 n)$.

\subsubsection*{Query Time.}
Since the locus of $P$ can be computed in $O(|P|)$ time using the suffix tree and $|I_h| \in O(n)$ holds for any heavy path $h$,
we obtain the following.
\begin{lemma}    
  We can represent a given text $T$ into an $O(n\log^{2}n)$ space data structure that supports a range top-$k$ consecutive occurrence query in $O(|P| + \log n + k)$ time.
\end{lemma}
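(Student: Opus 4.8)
The plan is to assemble the pieces that have already been prepared in this section and show they combine to give the claimed bound. The data structure consists of the suffix tree $\suffixtree$ of $T$, its heavy path decomposition, and, for each heavy path $h$, an instance of the data structure from Theorem~\ref{thm:geometrictopk} built on the segment set $I_h$. First I would verify the space bound: the suffix tree uses $O(n)$ space; by Theorem~\ref{thm:geometrictopk} each heavy path $h$ contributes $O(|I_h|\log|I_h|)$, and since $|I_h| \le |I_h| \cdot \text{(something } O(n)\text{)}$ and more precisely $|I_h| \in O(n)$, we have $\log|I_h| \in O(\log n)$, so the total is $O\bigl(\sum_h |I_h| \log n\bigr) = O\bigl(\log n \sum_h |I_h|\bigr)$. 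Lemma~\ref{lem:h-size} gives $\sum_h |I_h| = O(n\log n)$, so the total space is $O(n + n\log^2 n) = O(n\log^2 n)$, as asserted just above the statement.

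Next I would describe the query algorithm for input $(P, [a,b], k)$. Starting from the root of $\suffixtree$, traverse edges matching $P$ in $O(|P|)$ time to find $u = \locus(P)$; if $P$ does not occur, report nothing. From the heavy path decomposition we read off the heavy path $h$ containing $u$ and the depth $d(u)$ of $u$ within $h$ (both stored with the tree, so $O(1)$ after the locus is known). Then invoke the Definition~\ref{def:geometrictopk} query on the structure for $I_h$ with parameters $x = d(\locus(P))$, weight range — wait, here the weight of a segment is its first occurrence index $i$, so the range is $[a,\, b-|P|+1 - 1]$, i.e.\ $[a, b-|P|]$ to enforce $a \le i < b-|P|+1$ — and output count $k+1$. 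This returns up to $k+1$ segments with smallest y-coordinates in non-decreasing order of y-coordinate, each carrying its satellite consecutive occurrence $(i,j)$. Exactly as in the na\"ive solution of Section~\ref{sec:naive}, I would then post-process: examine the returned point with largest $i$; if its $j$ satisfies $j > b-|P|+1$ (the second copy of $P$ runs past $b$), discard it and report the remaining $k$; otherwise report the first $k$. The correctness argument is the one already spelled out before Definition~\ref{def:geometrictopk}: by construction of $I_h$, the segments stabbed at $x = d(u)$ with weight in $[a, b-|P|]$ are in bijection with consecutive occurrences $(i,j) \in \cons(P, T)$ having $i \ge a$ and $i < b-|P|+1$, and the y-coordinate is exactly the distance $j-i$; the only discrepancy between $\cons(P,T)$ restricted this way and $\cons(P, T[a..b])$ is the possible boundary occurrence at $j$ whose second copy overshoots $b$, which is precisely what the post-processing removes, and removing it affects at most the single candidate of largest first index — hence fetching $k+1$ and discarding at most one suffices.

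Finally I would total the query time: $O(|P|)$ for the locus, $O(1)$ to recover $h$ and $d(u)$, then by Theorem~\ref{thm:geometrictopk} the geometric query costs $O(\log|I_h| + (k+1)) = O(\log n + k)$ since $|I_h| = O(n)$, and $O(k+1)$ for the post-processing scan; altogether $O(|P| + \log n + k)$, matching the statement. The step I expect to need the most care is not any single estimate but making the reduction airtight — specifically, confirming that the "discard at most one" post-processing is correct when the returned list has fewer than $k+1$ elements (then all of $\cons(P,T[a..b])$ within range has been enumerated and we simply drop the overshooting one if present) and that the bijection between $I_h$-segments and consecutive occurrences of loci on $h$ is exact, including the edge cases at $\apex(h)$ and at the node where a leaf branches off, which Lemma~\ref{obs:life-co} handles. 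Everything else is a direct application of the already-established Theorem~\ref{thm:geometrictopk} and Lemma~\ref{lem:h-size}.
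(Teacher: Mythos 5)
Your proposal is correct and follows essentially the same route as the paper: the space bound via Theorem~\ref{thm:geometrictopk} and Lemma~\ref{lem:h-size}, and the query via locating $\locus(P)$ in $O(|P|)$ time followed by a top-$(k+1)$ stabbing query with weight constraint on $I_h$ (with $|I_h|\in O(n)$ giving the $O(\log n + k)$ term) and the same at-most-one-discard post-processing inherited from the na\"ive solution. The only difference is that you spell out the boundary bookkeeping more explicitly than the paper does, which is fine.
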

Note that the query time of the lemma is optimal if $|P|\ge \log n$.
When $|P| < \log n$, the $\log$ factor dominates the pattern length $|P|$, and the query time becomes $O(\log n + k)$. 

Further, we slightly improve the query time using an additional data structure that efficiently answer queries with short patterns, i.e., $|P|< \log n$.
For each node $v$ with $|\str(v)| < \log n$ in the suffix tree $\mathcal{T}$, we store a data structure built over the set $C(v)$ of consecutive occurrences as in Section~\ref{sec:naive}.
The additional space used at each node $v$ is $O(|C(v)| \log ^{\epsilon}|C(v)|)$.
In the suffix tree $\mathcal{T}$, the total number of leaves in the subtrees rooted at the nodes of a particular level is $n$, so $\sum_{v: |\str(v)| < \log n}|C(v)| = O(n\log n)$.
Hence, the overall space used by the additional data structure is $O(n\log^{1+\epsilon} n)$, which is subsumed by $O(n\log^2 n)$.

For a query pattern $P$ with $|P|<\log n$, we simply find the node $\locus(P)$ in the suffix tree $\mathcal{T}$ in $O(|P|)$ time and query the associated range reporting data structure as in the na\"ive solution of Section~\ref{sec:naive}.
Thus, the query time would be $O(|P|+\log \log n + k)$.

\subsubsection*{Construction.}
Next, we present an algorithm that computes sets $I_h$ for all heavy paths $h$ in $O(n\log n)$ time and $O(n)$ space. By definition of horizontal segments in $I_h$, the $x$-range of the horizontal segment for a consecutive occurrence $(i, j)$ is determined by depths of the highest and deepest nodes of the path $\Pi_{i,j}^h$ in $h$ and its $y$-coordinate is equal to $j-i$. The construction algorithm considers each path $h$ in the decomposition one after another and computes the corresponding set $I_h$ of horizontal segments. The idea is to traverse the path $h$ in the top-down fashion and determine (highest and deepest nodes of) the path $\Pi_{i,j}^h$ for each consecutive occurrence $(i, j)$.

We now describe how to compute the set of horizontal segments for a particular heavy path. Let $h=\langle v_0, v_1,\ldots, v_t \rangle$ be a heavy path in the decomposition, where $v_0=apex(h)$ and $v_t$ is a leaf node in the suffix tree $\mathcal{T}$. We associate each node $v_p$ with the leaves that are present in the subtree $\mathcal{T}_p$ rooted at $v_p$, but not in $\mathcal{T}_{p+1}$. See Figure~\ref{fig:node-to-leaves}. The node $v_p$, for each $p=0,1,\ldots, t-1$, stores \emph{leaf pointers} to the leaves associated with it. The total number of leaf pointers is $O(n \log n)$ since any leaf to root path crosses at most $\log n$ heavy paths. Also, all the leaf pointers can be easily precomputed in $O(n \log n)$ in a bottom up manner.
\begin{figure}[tb]
  \centering
  \includegraphics[width=0.5\linewidth]{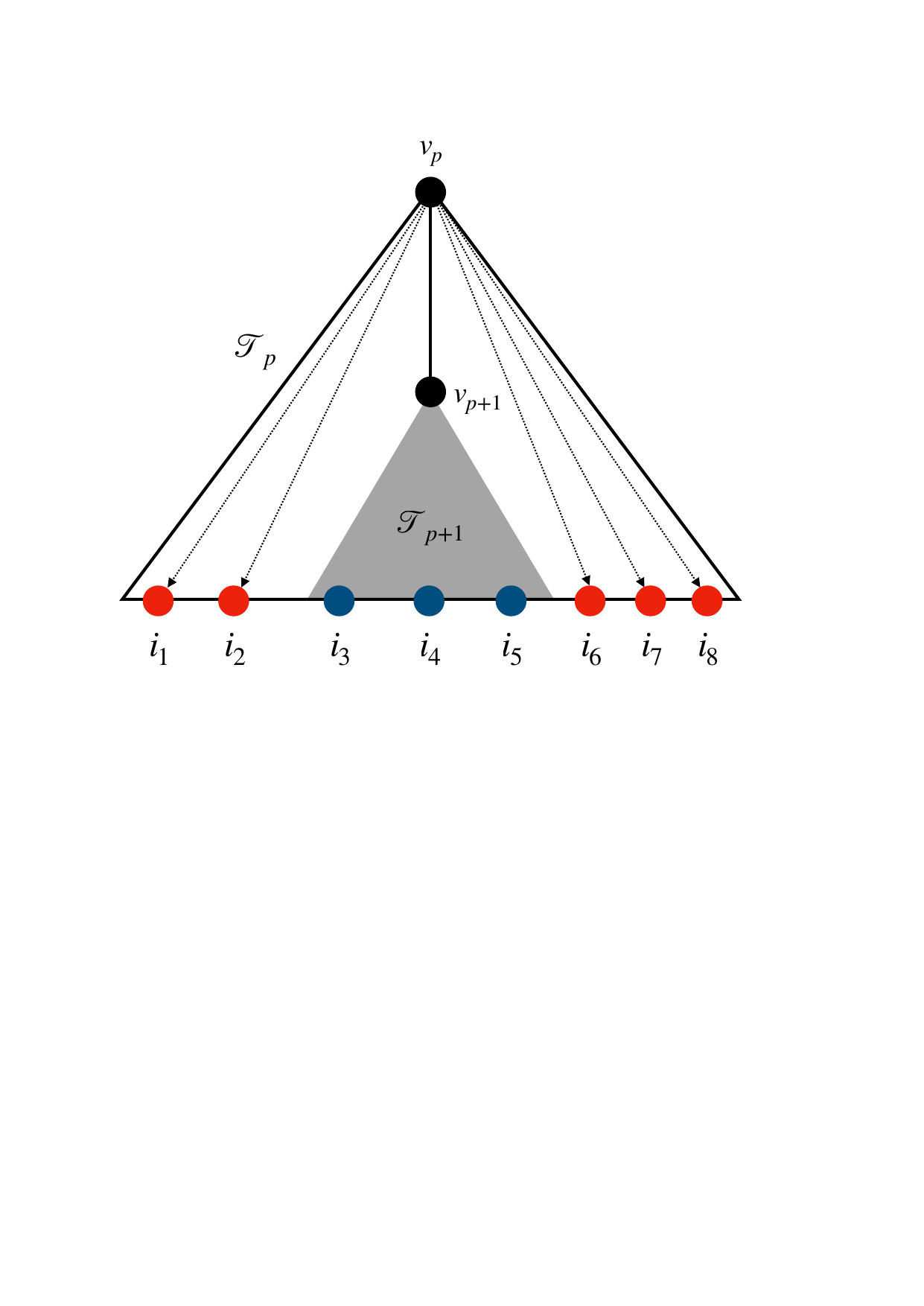}
  \caption{The subtree $\mathcal{T}_p$ has leaves $i_1, i_2, \ldots, i_8$ and $\mathcal{T}_{p+1}$ has leaves $i_3, i_4, i_5$, in blue color. Leaves $i_1, i_2, i_6, i_7$, and $ i_8$, in red color, are present in $\mathcal{T}_p$ but not in $\mathcal{T}_{p+1}$.
  The node $v_p$ stores leaf pointers to such leaves, shown as dotted arrows.}
  \label{fig:node-to-leaves}
\end{figure}
We next store the following information in a doubly linked list $\mathsf{L}_h$:
the labels of leaves in the subtree $\mathcal{T}_0$ rooted at $v_0 = \apex(h)$ sorted in increasing order.
Sorting such leaf labels for all $h$ can be performed in a total of $O(n \log n)$ time by applying radix sort to the $O(n \log n)$ pairs of nodes, where each pair consists of a leaf and the apex of a heavy path which is an ancestor of the leaf.
Note that using the leaf pointers, each node $v_p$ can identify (locate) its corresponding leaves in the list $\mathsf{L}_h$, in $O(1)$ time per leaf.
We now process the nodes $v_0, v_1,\ldots, v_t$ of path $h$, in order,  and determine, for each consecutive occurrence $(i,j)$, the highest and deepest nodes of the path $\Pi_{i,j}^h$ using the list $\mathsf{L}_h$. 
Initially, each consecutive pair of elements in $\mathsf{L}_h$ corresponds a consecutive occurrence of the string $\str(v_0)$, and $v_0$ serves as the highest node for each of them.
When we move from $v_p$ to $v_{p+1}$, we identify and delete the elements (leaves) from $\mathsf{L}_h$ that are associated with $v_p$ using the leaf pointers. For each deleted element $k$, at most two horizontal segments are included in the initially empty set $I_h$:
Suppose that $i$ and $j$ are the left and right neighbors of $k$ in the list $\mathsf{L}_h$, respectively.
Then, $(i, k)$ and $(k, j)$ are the consecutive occurrences at node $\str(v_p)$ to which $k$ contributes and they are no longer present at $v_{p+1}$.
In other words, $v_p$ is the deepest node at which $(i, k)$ and $(k, j)$ are present.
If $v_{p_1}$ and $v_{p_2}$ are the highest nodes on $h$ at which the consecutive occurrences $(i, k)$ and $(k, j)$ are present, respectively, we include the horizontal segments $[d(v_{p_1}), d(v_{p})]\times(k-i)$ and $[d(v_{p_2}), d(v_{p})]\times(j-k)$ into the set $I_h$. Here, $d(v_\ell )=\ell$ for $\ell=0,1,\ldots,t$.

Computing the leaf pointers to leaves from internal nodes and sorting labels of leaves can be done in $O(n \log n)$ time in total.
Let $n_h$ denotes the number of leaves in the subtree of $\mathcal{T}$ rooted at $\apex(h)$. The sorted list $\mathsf{L}_h$ occupies $O(n_h)$ space. Since each leaf is deleted from the list $\mathsf{L}_h$ only once and we are doing $O(1)$ amount of work for each deletion, so the overall time for all deletions would be $O(n_h)$. Therefore, the total running time for heavy path $h$ is $O(n_h)$. Summing over all heavy paths, we get the overall running time of the algorithm, which is $O(n\log n + \sum_h n_h) = O(n\log n + \sum_h |I_h|)= O(n\log n)$.
Thus, we have the following lemma.
\begin{lemma}\label{lem:preprocess-algo}
  Given a heavy path decomposition of the suffix tree $\mathcal{T}$, we can compute the set of horizontal segments for each heavy path in overall $O(n\log n)$ time.
  The proposed algorithm uses $O(n)$ working space.
\end{lemma}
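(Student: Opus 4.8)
The plan is to verify that the procedure described above outputs exactly the segment set $I_h$ for every heavy path $h$, and then to bound its running time and working space. Fix a heavy path $h=\langle v_0,\dots,v_t\rangle$ and let $L_p$ denote the set of leaf labels of the subtree $\mathcal{T}_p$; by definition of the leaf pointers, $L_p\setminus L_{p+1}$ is exactly the set of leaves associated with $v_p$. The invariant I would maintain is that, immediately after the algorithm processes the transition from $v_p$ to $v_{p+1}$, the doubly linked list $\mathsf{L}_h$ stores the elements of $L_{p+1}$ in increasing order; this is true initially (when $\mathsf{L}_h$ is the sorted $L_0$) by construction, and each transition preserves it because the elements removed are precisely $L_p\setminus L_{p+1}$. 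Under this invariant, the pairs adjacent in $\mathsf{L}_h$ at the moment it represents $L_p$ are exactly the elements of $C(v_p)=\cons(\str(v_p),T)$, since a leaf label lying strictly between $i$ and $j$ in $\mathsf{L}_h$ corresponds to an occurrence $g$ of $\str(v_p)$ with $i<g<j$, and conversely.

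To emit the right segments I would additionally keep, for each pair currently adjacent in $\mathsf{L}_h$, its \emph{birth node} --- the node at which that pair became adjacent --- initialised to $v_0$ for all pairs adjacent in $L_0$. Handling the transition out of $v_p$, I would first read off $\mathsf{L}_h$ (still representing $L_p$) every adjacent pair $(a,b)$ having an endpoint in $L_p\setminus L_{p+1}$ and emit, for each such pair exactly once, the single segment whose $y$-coordinate is $b-a$ and whose $x$-range is $[d(u'),d(v_p)]$, where $u'$ is the stored birth node, carrying $(a,b)$ as satellite data; this is correct because such a pair lies in $C(v_p)$ but not in $C(v_{p+1})$, so $v_p$ is the deepest node of $\Pi_{a,b}^h$. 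Then I would delete the batch $L_p\setminus L_{p+1}$ and assign birth node $v_{p+1}$ to every adjacency newly created by these deletions (one per maximal run of deleted elements, and such a pair is not in $C(v_p)$ but is in $C(v_{p+1})$, and by Lemma~\ref{obs:life-co} cannot have been present at any strictly higher node either). The delicate point --- and the step I expect to be the main obstacle --- is precisely this ordering: adjacencies that appear only transiently while a batch is being removed are not genuine consecutive occurrences, so the dying pairs must be enumerated from $L_p$ before any removal, and the new births recorded from $L_{p+1}$ after. Once this is settled, every pair $(i,j)$ that is a consecutive occurrence at some node of $h$ contributes exactly one segment, with $x$-range $[d(u),d(w)]$ where $u,w$ are the two ends of the subpath $\Pi_{i,j}^h$ that Lemma~\ref{obs:life-co} guarantees to be well defined; this is exactly the claimed $I_h$.

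Finally I would account for the resources. All leaf pointers are produced in $O(n\log n)$ time by a single bottom-up traversal, since each leaf is associated with one node on each of the $O(\log n)$ heavy paths crossed by its root-to-root path; by the same counting $\sum_h n_h=O(n\log n)$, where $n_h$ is the number of leaves in the subtree rooted at $\apex(h)$. All the lists $\mathsf{L}_h$ are obtained together by radix-sorting the $O(n\log n)$ pairs $(\ell,\apex(h))$ with $\apex(h)$ an ancestor of $\ell$, in $O(n\log n)$ time. Processing a single heavy path $h$ then costs $O(n_h)$: initialising $\mathsf{L}_h$ and the birth nodes is $O(n_h)$, each of the $n_h$ leaves is deleted once, and a deletion does $O(1)$ amortised work --- constant-time neighbour lookups in the linked list, emitting $O(1)$ segments, one birth-pointer update --- which also shows $|I_h|=O(n_h)$, consistent with Lemma~\ref{lem:h-size}. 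Summing over all heavy paths, the total time is $O(n\log n+\sum_h n_h)=O(n\log n)$. For the working space I would process the heavy paths one at a time, keeping only the current list $\mathsf{L}_h$, an auxiliary array locating each leaf label inside $\mathsf{L}_h$, and the birth pointers --- all $O(n_h)=O(n)$ --- and reclaiming them before the next heavy path; with the emitted segments counted as output and the precomputed sorted leaf lists treated as given, the working space is $O(n)$.
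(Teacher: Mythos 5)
Your proof is correct and follows essentially the same route as the paper: a sorted doubly linked list of leaf labels per heavy path, top-down batch deletion of the leaves that branch off at each node, emission of one segment when an adjacent pair dies, and the same resource accounting via radix sort, $O(1)$ work per deletion, and $\sum_h n_h = O(n\log n)$ with heavy paths processed one at a time for $O(n)$ working space. Your explicit birth-node bookkeeping and the care to enumerate dying pairs (with deduplication) before performing the batch removal merely make precise details the paper leaves implicit when it refers to ``the highest nodes on $h$ at which the consecutive occurrences are present.''
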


The suffix tree $\mathcal{T}$ for the string $T$ can be constructed in $O(n\log n)$ time~\cite{Weiner73}, and it can be decomposed using the heavy path decomposition in $O(n)$ time. By Lemma~\ref{lem:preprocess-algo}, we can compute the set $I_h$ of horizontal segments for every heavy path $h$ in a total of $O(n\log n)$ time. 
Further, with each heavy path $h$ in the decomposition, the geometric structure of Theorem~\ref{thm:geometrictopk} over the set $I_h$ of horizontal segments is stored, which can be built in $O(|I_h| \log^2 |I_h|)$ time. Thus, the total time needed to build these geometric structures is $\sum_h O(|I_h| \log^2 |I_h|) = O(n \log^3 n)$, as $|I_h|\in O(n)$ and $\sum_h |I_h|=O(n\log n)$ (Lemma~\ref{lem:h-size}), which is the bottle neck. Thus, we obtain the main theorem of this paper:
\begin{theorem}\label{thm:topk}
  We can represent a given text $T$ into an $O(n\log^{2}n)$ space data structure that supports a range top-$k$ consecutive occurrence query in $O(|P| + \log\log n + k)$ time.
  The data structure can be constructed in $O(n \log^3 n)$ time.
\end{theorem}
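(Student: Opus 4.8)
The plan is to assemble the pieces already developed above into one structure with one query algorithm. For the construction I would build the suffix tree $\suffixtree$ of $T$ in $O(n\log n)$ time, compute its heavy path decomposition in $O(n)$ time, and then apply Lemma~\ref{lem:preprocess-algo} to obtain the segment set $I_h$ for every heavy path $h$ in $O(n\log n)$ total time and $O(n)$ working space. On each $I_h$ I would build the geometric structure of Theorem~\ref{thm:geometrictopk}, using the first occurrence $i$ of each consecutive occurrence $(i,j)$ as the weight of its segment $[d(u),d(w)]\times(j-i)$. Since $|I_h| = O(n)$ for each $h$ and $\sum_h |I_h| = O(n\log n)$ by Lemma~\ref{lem:h-size}, these structures use $O(\sum_h |I_h|\log|I_h|) = O(n\log^2 n)$ space and are built in $O(\sum_h |I_h|\log^2|I_h|) = O(n\log^3 n)$ time, which is the construction bottleneck. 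Separately, for each node $v$ of $\suffixtree$ with $|\str(v)| < \log n$ I would attach the sorted-range-reporting structure of Section~\ref{sec:naive} over the points derived from $C(v)$; as the subtrees hanging at any fixed string-depth contain $n$ leaves in total and there are fewer than $\log n$ relevant depths, $\sum_{v:\,|\str(v)|<\log n}|C(v)| = O(n\log n)$, so this adds only $O(n\log^{1+\epsilon} n)$ space and is built within the time budget already claimed, both absorbed into the bounds above.

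For a query $(P,[a,b],k)$ I would first compute $\locus(P)$ in $O(|P|)$ time, recording the heavy path $h$ containing it and the within-path depth $d(\locus(P))$ (if $P$ does not occur in $T$ the answer is empty). If $|P| \ge \log n$, I would invoke the geometric top-$k$ query of Definition~\ref{def:geometrictopk} on $I_h$ with stabbing value $x = d(\locus(P))$, weight window $[a,\,b-|P|]$, and parameter $k+1$. By the correspondence built into the segments, a segment is stabbed at $x$ and has weight in the window exactly when its associated pair $(i,j)$ lies in $\cons(P,T)$ with $a\le i\le b-|P|$; as argued for the na\"ive solution, this set coincides with $\cons(P,T[a..b])$ except possibly for the single pair whose first coordinate $i$ is largest, which must be discarded precisely when its mate satisfies $j > b-|P|+1$. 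Hence I would locate, among the $k+1$ returned pairs (already in non-decreasing order of distance), the one with largest $i$, drop it if its $j$ exceeds $b-|P|+1$, and report the $k$ smallest-distance survivors (or all of them if fewer). If instead $|P| < \log n$, I would run the analogous $(k+1)$-point query with the same end-of-range correction on the per-node structure stored at $\locus(P)$, exactly as in Section~\ref{sec:naive}.

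The running time splits by the two cases: for $|P| \ge \log n$ the geometric query costs $O(\log|I_h| + k) = O(\log n + k) = O(|P| + k)$, while for $|P| < \log n$ the na\"ive-style query costs $O(\log\log n + k)$; together with the $O(|P|)$ spent locating $\locus(P)$ this yields $O(|P| + \log\log n + k)$ in all cases, completing the proof. I expect the only genuinely delicate point to be the reduction to Definition~\ref{def:geometrictopk}: one must verify that the $x$-range $[d(u),d(w)]$ fixed by Lemma~\ref{obs:life-co} makes ``stabbed by the line $x = d(\locus(P))$'' equivalent to ``$(i,j)\in\cons(P,T)$'', that restricting the weight to $[a,b-|P|]$ captures exactly the valid starting positions of the first occurrence inside $T[a..b]$, and that at most one returned pair can fail the complementary condition on $j$ so that fetching a single spare segment suffices to repair the list. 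Each of these follows the pattern already established for the na\"ive data structure, and everything else is bookkeeping over bounds already proved.
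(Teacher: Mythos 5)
Your proposal is correct and follows essentially the same route as the paper: heavy-path decomposition of the suffix tree, the segment sets $I_h$ built via Lemma~\ref{lem:preprocess-algo}, the geometric structure of Theorem~\ref{thm:geometrictopk} queried with stabbing value $d(\locus(P))$, weight $i$ in window $[a,\,b-|P|]$, and parameter $k+1$ with the single end-of-range correction, plus the auxiliary na\"ive structures at nodes with $|\str(v)|<\log n$ to replace the $\log n$ term by $\log\log n$. The space, query-time, and $O(n\log^3 n)$ construction analyses (with the geometric structures as the bottleneck) match the paper's proof.
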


\subsection{Proof of Theorem~\ref{thm:geometrictopk}: Solving Geometric Subproblem}\label{sec:geometric}

In this subsection, we prove Theorem~\ref{thm:geometrictopk}.
\subsubsection*{Data Structure.}
Given set $I$ of horizontal segments, we build a segment tree $\ST$ over $I$.
Each node $v\in \ST$ maintains the segments stored at node $v$, in a list $L_v$ sorted by their weights.
To efficiently perform predecessor search, we employ the fractional cascading technique (see also Section~\ref{sec:pre}) on the sorted lists $L_v$ for all $v \in \ST$.
We define an array $Y_v$ over the list $L_v$ such that $Y_v[i]$ stores the y-coordinate of segment $L_v[i]$.
We then preprocess the array $Y_v$ for the sorted range reporting queries of Lemma~\ref{lem:brodal}.
The size of the enhanced segment tree structure is $O(|I| \log |I|)$
since data structures associated with each node $v$ use $O(|L_v|)$ space
and $O(\sum_{v\in \ST}|L_v|) = O(|I|\log|I|)$ holds.

\subsubsection*{Query Algorithm.}
Given a query $(x, [w_1, w_2], k)$,
we first find the search path $\pi$ for $x$ in the segment tree $\ST$.
Let $\pi = \langle u_0 = \mathit{root}, u_1, \ldots, u_{|\pi|-1}\rangle$.
We then traverse the path $\pi$ in the top-down fashion and for each node $v \in \pi$, compute the positions (ranks) of $w_1$ and $w_2$ in list $L_v$,
respectively denoted by $\rnk_v(w_1)$ and $\rnk_v(w_2)$,
using the pointers provided by the fractional cascading technique described in Section~\ref{sec:pre}.
We prepare an incremental buffer $B$ of size $O(k + \log |I|)$ consisting of $|\pi| = O(\log |I|)$ rows.
Each $i$th row of $B$ corresponds to the node $u_i$ on the path $\pi$, denoted by $B[u_i]$.
Moreover, we maintain a min-heap $H$ implemented as a dynamic fusion node~(Lemma~\ref{lem:fusionnode}).
See Figure~\ref{fig:query} for illustration.
\begin{figure}[tbh]
  \centering
  \includegraphics[width=\linewidth]{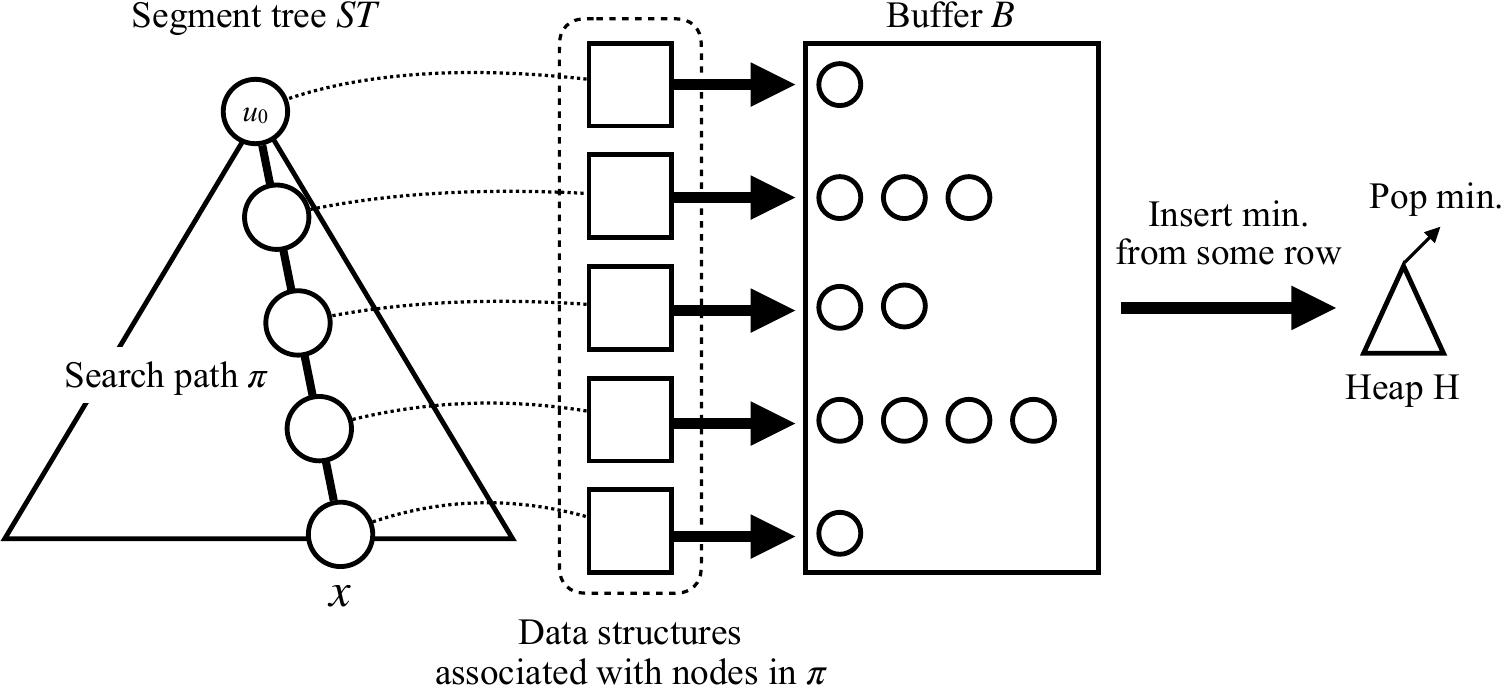}
  \caption{Illustration for an overview of the data structures used in the query algorithm.}
  \label{fig:query}
\end{figure}

We find top-$1$~(i.e., the smallest) items from the subarrays $Y_v[\rnk_v(w_1).. \rnk_v(w_2)]$ for all nodes $v \in \pi$ and
initialize the heap $H$ with the minima.
Also, we initialize the \emph{counter} $e(v) = 0$ for each node $v\in\pi$.
We then keep repeating the following steps until the heap $H$ gets empty or $k$ items are reported:
\begin{enumerate}
  \item Pop the minimum item $\mu$ from the heap $H$ and report it.
    Let $v$ be the node on the path $\pi$ from which the item $\mu$ came.
  \item If $v$'s buffer $B[v]$ is non-empty, simply pop the minimum item from $B[v]$ and insert it into the heap $H$. Then continue to the next iteration.
  \item Otherwise (if $B[v]$ is empty), find top-$2^{e(v)+1}$ items from $Y_v[\rnk_v(w_1).. \rnk_v(w_2)]$ by using~Lemma~\ref{lem:brodal}.
    Then remove the first half $2^{e(v)}$ of them, which have already been moved to $H$ in previous steps,
    and insert the remaining items to $B[v]$.
    Lastly, pop the minimum item from $B[v]$ and insert it into the heap $H$,
    and increment $v$'s counter $e(v)$ by $1$.
\end{enumerate}

\subsubsection*{Correctness.}
We focus on the changes to the buffer $B[v]$ for any fixed node $v \in \pi$ throughout the query algorithm.
Buffer $B[v]$ grows only if we enter Step~3, where $B[v]$ is empty.
Picking up only steps where $B[v]$ grows,
in each such step,
$B[v]$ is updated to the top-$2^{e(v)}$ items from the items in $Y_v[\rnk_v(w_1).. \rnk_v(w_2)]$
that are not inserted to $H$ before,
for incremental $e(v) = 0, 1, 2, \ldots$.
Also, when the smallest item from $B[v]$ is inserted to $H$ at Step~2, the item is removed from $B[v]$.
Hence, buffer $B[v]$ keeps the top-$c$ items from $Y_v[\rnk_v(w_1).. \rnk_v(w_2)]$ that have not been inserted to $H$ for some $c \ge 0$.
Also, the smallest \emph{unreported} item from $Y_v[\rnk_v(w_1).. \rnk_v(w_2)]$ is always present in heap $H$ for every $v \in \pi$ due to Step~2.
Thus, for each iteration of the three steps of the algorithm,
the smallest unreported item from $\pi$ that satisfies the weight constraint will be reported at Step~1 of the iteration.

\subsubsection*{Query Time.}
Next, we analyze the running time of the query algorithm.
Given $x$, determining the search path $\pi$ in the segment tree $\ST$ takes $O(\log |I|)$ time.
Also, thanks to the fractional cascading technique, all the ranks of $w_1$ and $w_2$ in the nodes in $\pi$ can be computed in a total of $O(\log |I|)$ time.
As the heap $H$ is implemented as dynamic fusion node structure, initializing $H$ takes $O(\log |I|)$ time.
While iterating the loop, when buffer $B[v]$ becomes empty (i.e., all items of $B[v]$ have either been moved to $H$ or reported as output) for a node $v$, it is replaced by a new buffer twice the size of the previous one in $O(|B[v]|)$ time by Lemma~\ref{lem:brodal}.
Since (1) the size of $B[v]$ is at most the number of items from $v$ already reported or moved to $H$ and
(2) the number of rows in $B$ is $|\pi| \in O(\log |I|)$,
the total size of the buffer $B$ is $O(k + \log |I|)$.
Further, each operation in a dynamic fusion node structure takes constant time.
To summarize, the total time spent\footnote{Note that if we implement the heap $H$ as a standard binary heap instead of the dynamic fusion node, the query time becomes $O((k + \log |I|)\log\log |I|)$ since the size of $H$ is $O(\log |I|)$.} to answer the query is $O(k + \log |I|)$.

\subsubsection*{Construction.}
We now analyze the construction of the enhanced segment tree for the set $I$.
Building the segment tree for the set $I$ takes $O(|I|\log |I|)$ time.
The fractional cascading data structure over the lists $L_v$ can be constructed in $O(|I|\log |I|)$ time and space (see Theorem~S in \cite{Chazelle86FC}).
By Lemma~\ref{lem:brodal}, the sorted range reporting data structure over the array $Y_v$ can be built using $O(|Y_v|\log |Y_v|)$ time and space.
Summing over all the nodes in the segment tree, we get the total time needed for this task, which would be equal to $\sum_v O(|Y_v|\log |Y_v|) = O(|I|\log^2 |I|)$.
This factor dominates the overall preprocessing time.

We have completed the proof of Theorem~\ref{thm:geometrictopk}.

\section{Range Gap-Bounded Consecutive Occurrence Queries}\label{sec:boundedgap}
In this section, we describe a space efficient data structure that supports gap-bounded queries~(Definition~\ref{def:gapbounded}).
The data structure is similar to the one described for top-$k$ queries in Section~\ref{sec:topk},
and it uses $O(n\log^{2+\epsilon}n)$ space. 
The query time is $O(|P|+\log\log n + \out)$.
The data structure differs from the top-$k$ structure described in Section~\ref{sec:topk} only in the representation of the sets $I_h$ of segments. 

We first build the suffix tree $\mathcal{T}$ for the text $T$ and decompose it using heavy path decomposition.
For each heavy path $h$ in the decomposition, we compute a set $I_h$ of horizontal segments that compactly represent the consecutive occurrences
as in Section~\ref{sec:topk-DS}.
Let $(P, [a, b], [g_1, g_2])$ be a query we should answer.
Firstly, we find the heavy path $h$ where $\locus(P)$ belongs.
Then, we can obtain the desired segments as follows:
\begin{enumerate}
  \item Compute segments $[d(u), d(w)]\times (j-i) \in I_h$ such that $d(u) \le d(\locus(P))\le d(w)$, $a \le i < b-|P|+1$, and $g_1 \le (j-i) \le g_2$ hold.
  \item Scan the consecutive occurrences corresponding to the obtained segments
    and remove (at most one) consecutive occurrence $(i ,j)$ with  $j > b-|P|+1$ if it exist~(as we mentioned in Section~\ref{sec:naive}).
  \item Report the remaining segments.
\end{enumerate}

The second and the third steps are obvious.
The first step can be reduced to the following problem for $I = I_h$,
which is different from that of the top-$k$ algorithm.
\begin{definition}[2D height-bounded stabbing query with weight constraint]\label{def:geometricgapbounded} 
  A set $I = \{[l_1, r_1]\times y_1, \ldots, [l_{|I|}, r_{|I|}]\times y_{|I|}\}$ of horizontal segments in a 2D plane
  and weight function $w: I \to \mathbb{N}$ are given for preprocessing.
  The query is, given $(x, [w_1, w_2], [\psi_1, \psi_2])$,
  to report all the segments $s_1, \ldots, s_t$ in 
  non-decreasing order of y-coordinate
  such that $l_i \le x \le r_i$, $\psi_1 \le y_i \le \psi_2$ and $w_1 \le w(s_i) \le w_2$ hold
  for every segment $s_i = [l_i, r_i] \times y_i$.
\end{definition}

We show the following theorem:
\begin{theorem} \label{thm:geometricgapbounded}
  There is a data structure of size $O(|I| \log^{1+\epsilon} |I|)$ that can answer the query
  of Definition~\ref{def:geometricgapbounded} in $O(\log |I| \log \log |I| + \out)$ time
  where $\epsilon > 0$ is a constant and $\out$ is the number of outputs.
  Given set $I$, the data structure can be constructed in 
  $O(|I|\log^{3/2} |I|)$ time.
\end{theorem}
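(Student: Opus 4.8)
The plan is to mirror the segment-tree construction used in the proof of Theorem~\ref{thm:geometrictopk}, but to replace the per-node sorted range reporting structure (Lemma~\ref{lem:brodal}) with a structure that supports the additional $y$-range constraint $[\psi_1,\psi_2]$ while still emitting output in sorted order of $y$. First I would build a segment tree $\ST$ over the $x$-endpoints of $I$; a segment $s = [l_i,r_i]\times y_i$ is stored at the $O(\log |I|)$ canonical nodes $v$ with $s \cap H(v) \ne \emptyset$ and no endpoint of $s$ strictly inside $H(v)$. Exactly as before, $s$ stabs the vertical line $x = x_0$ iff $s$ is stored on the root-to-leaf search path $\pi$ for $x_0$, so the query reduces to scanning the $O(\log |I|)$ lists along $\pi$, keeping from each list only the segments whose weight lies in $[w_1,w_2]$ and whose $y$-coordinate lies in $[\psi_1,\psi_2]$, and merging these sublists by $y$-coordinate.

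At each node $v$ I would store the segments $S(v)$ as points in a $2$D plane whose axes are (weight, $y$-coordinate), and build over them the \emph{sorted} $2$D orthogonal range reporting structure of Lemma~\ref{lem:sortedrangereporting}. Then a query $(x,[w_1,w_2],[\psi_1,\psi_2])$ is answered by: (1) computing the search path $\pi$ for $x$ in $O(\log|I|)$ time; (2) issuing, at each of the $O(\log|I|)$ nodes of $\pi$, a sorted range reporting query on the rectangle $[w_1,w_2]\times[\psi_1,\psi_2]$, each of which pays $O(\log\log|I|)$ overhead plus output; and (3) merging the $O(\log|I|)$ sorted streams by $y$-coordinate using a min-heap keyed by the current head of each stream. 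Since Lemma~\ref{lem:sortedrangereporting} reports points one by one in sorted order, I can pull lazily and never fetch more than $O(\out + \log|I|)$ points total. For the heap I would again use a dynamic fusion node (Lemma~\ref{lem:fusionnode}), which is legitimate because the heap holds at most $|\pi| = O(\log|I|) = w^{O(1)}$ keys and hence supports all operations in $O(1)$ time. The per-query cost is therefore $O(\log|I|)$ for the search path $+\ O(\log|I|\cdot\log\log|I|)$ for the $O(\log|I|)$ sorted-range queries $+\ O(\out)$ for the merge, i.e.\ $O(\log|I|\log\log|I| + \out)$ as claimed.

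For space and construction: Lemma~\ref{lem:sortedrangereporting} gives a structure of $O(|S(v)|\log^{\epsilon}|S(v)|)$ words built in $O(|S(v)|\sqrt{\log|S(v)|})$ time over the rank space of $S(v)$ (and an $O(|S(v)|\log|S(v)|)$-time presort to map weights and $y$-values to ranks is absorbed). Since $\sum_{v\in\ST}|S(v)| = O(|I|\log|I|)$, the total space is $\sum_v O(|S(v)|\log^{\epsilon}|S(v)|) = O(|I|\log^{1+\epsilon}|I|)$ and the total construction time is $\sum_v O(|S(v)|\sqrt{\log|S(v)|} + |S(v)|\log|S(v)|) = O(|I|\log^{3/2}|I|)$, the $\log^{3/2}$ term dominating because of the presort and the $\sqrt{\log}$ factor against the $\log|I|$ total list size — matching the bounds in the statement.

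The main obstacle I anticipate is the correctness of the lazy merge: I must argue that the $O(\log|I|)$ per-node output streams really are sorted by $y$ and that no valid stabbing segment is missed, which follows from the canonical-node property of the segment tree (every stabbing segment is stored at exactly one node of $\pi$) together with the sorted-output guarantee of Lemma~\ref{lem:sortedrangereporting}; and I must control how far ahead the merge reads each stream so that the total number of fetched-but-unreported points stays $O(\log|I|)$ — this is handled by pulling exactly one new point from a stream only when its current head is popped from the heap, so the heap and the over-read together contribute only $O(\log|I|)$ beyond the $\out$ reported points. A secondary subtlety is that fractional cascading does \emph{not} obviously help here (the per-node queries are $2$D, not $1$D predecessor searches), which is precisely why the query time carries an extra $\log\log|I|$ factor compared to Theorem~\ref{thm:geometrictopk}; I would note this explicitly rather than attempt to remove it.
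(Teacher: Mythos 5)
Your construction is essentially the paper's: the same segment tree, the same per-node point set $\{(w(s),y)\}$ equipped with the sorted orthogonal range reporting structure of Lemma~\ref{lem:sortedrangereporting}, and the same merge of the $O(\log|I|)$ per-node sorted streams via a heap implemented as a dynamic fusion node, yielding $O(\log|I|\log\log|I|+\out)$ query time and $O(|I|\log^{1+\epsilon}|I|)$ space. The one flaw is in your construction-time accounting: you charge an $O(|S(v)|\log|S(v)|)$ presort (for the rank-space reduction) at every node and assert $\sum_{v}|S(v)|\log|S(v)| = O(|I|\log^{3/2}|I|)$, but since $\sum_{v}|S(v)| = \Theta(|I|\log|I|)$ and individual nodes can hold $\Theta(|I|)$ segments, this sum is $\Theta(|I|\log^{2}|I|)$ in the worst case, overshooting the claimed bound. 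The fix is standard: sort the segments of $I$ globally by weight and by $y$-coordinate once in $O(|I|\log|I|)$ time and distribute them to their canonical nodes in that order, so the per-node rank-space reduction costs only $O(|S(v)|)$; the construction time is then dominated by $\sum_{v} O(|S(v)|\sqrt{\log|S(v)|}) = O(|I|\log^{3/2}|I|)$, exactly as in the paper (which leaves this rank-reduction step implicit). Also note that your lazy pulling of the streams is harmless but unnecessary: every point fetched from $[w_1,w_2]\times[\psi_1,\psi_2]$ is reported anyway, so $\sum_{v}|\sigma(v)| = \out$ and the paper simply materializes the lists $\sigma(v)$ before merging.
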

\begin{proof}
  The idea is very similar to that of Theorem~\ref{thm:geometrictopk}.
  We construct a segment tree $\ST$ over $I$ and associate some data structures with nodes of the tree.
  Given a query $(x, [w_1, w_2], [\psi_1, \psi_2])$,
  we first find the search path $\pi$ for $x$.
  After that, the remaining operations related to $[w_1, w_2]$ and $[\psi_1, \psi_2]$ are considered on each node in $\pi$.
  The differences are:
  (1) we associate another data structure with nodes of the segment tree, and
  (2) we do not need to use a doubling technique with an incremental buffer in the query algorithm.
  Below, we focus only on the differences from Theorem~\ref{thm:geometrictopk}.

  Let $v$ be a node of the segment tree $\ST$.
  For each segment $s = [l, r]\times y$ in the set $S(v)$ of segments for $v$, we create a 2D point $p(s) = (w(s), y)$.
  We then construct a sorted orthogonal range reporting data structure of Lemma~\ref{lem:sortedrangereporting} on top of the set $\bigcup_{s\in S(v)} p(s)$ of points.
  The total size of the segment tree structure with range reporting structures is $O(|I|\log^{1+\epsilon} |I|)$.

  At each node $v$ of the segment tree $\ST$, construction of the sorted range reporting structure takes $O(|S(v)|\sqrt {\log |S(v)|})$ (Lemma~\ref{lem:sortedrangereporting}). So, the overall time for building this structure for all nodes in $\ST$ would take $\sum_{v\in \ST} O(|S(v)|\sqrt {\log |S(v)|})= O(|I|\log^{3/2} |I|)$, which subsumes $O(|I|\log |I|)$ time required to construct the segment tree $\ST$. Therefore, the construction time of the data structure is $O(|I|\log^{3/2} |I|)$.

  When a query is given, we traverse the search path $\pi$ for $x$.
  At each node $v$ in $\pi$,
  we create a (sorted) list $\sigma(v)$ of all the desired segments in $S(v)$
  by reporting all the points within the rectangle $[w_1, w_2] \times [\psi_1, \psi_2]$
  using the sorted orthogonal range reporting data structure.
  After finishing the traversal,
  we merge the sorted lists $\sigma(v)$ for all $v \in \pi$
  by using a heap implemented as dynamic fusion node as in Section~\ref{sec:topk},
  and report the segments in the merged list in order.
  The total time to answer a query is $O(\log |I| \log \log |I| + \out)$ time
  since the length of $\pi$ is $O(\log |I|)$ and range reporting query at each $v$ takes $O(\log \log |I| + |\sigma(v)|)$ time.

  The correctness can be seen from the definition of the 2D points $(w(s), y)$ within each node and the definition of the sorted range reporting query to be performed on those points.
\end{proof}

Now we obtain an $O(|P| + \log n\log\log n+ \out)$-query-time method for the range gap-bounded consecutive occurrence query using $O(n \log^{2+\epsilon} n)$ space since $\sum_h |I_h| \in O(n \log n)$ by Lemma~\ref{lem:h-size}.
We reduce the query time to $O(|P|+ \log\log n + \out)$ using the same technique employed for the top-$k$ case.
For each node $v$ in the suffix tree $\mathcal{T}$ with $|\str(v)|< \log n\log\log n$,
we build a data structure over the set $C(v)$ of consecutive occurrences as in Section~\ref{sec:naive}.
By the same arguments used in the top-$k$ case, the total space used by these additional data structures is
$O(n\log^{1+\epsilon}n\log\log n)$, which is dominated by $O(n\log^{2+\epsilon}n)$.

The suffix tree $\mathcal{T}$ over text $T$ can be built in $O(n\log n)$, and $\mathcal{T}$ can be decomposed into heavy paths in $O(n)$ time. 
The set $I_h$ of horizontal segments for each heavy path $h$ can be computed in a total of $O(n\log n)$ time using the algorithm presented in Lemma~\ref{lem:preprocess-algo}. 
The geometric data structure of Theorem~\ref{thm:geometricgapbounded} associated with each heavy path $h$ can be built in $O(|I_h|\log^{3/2} |I_h|)$ time. So, the total time for building geometric structures is $\sum_h O(|I_h|\log^{3/2} |I_h|) = O(n\log^{5/2} n)$, which dominates the other factors. Thus, the overall construction time of the final data structure is $O(n\log^{5/2} n)$.
Therefore, the next theorem holds;
\begin{theorem}\label{thm:gapbounded}
  We can represent a given text $T$ into an $O(n\log^{2+\epsilon}n)$ space data structure
  that supports a range gap-bounded consecutive occurrence query in $O(|P| + \log\log n + \out)$ time.
  The data structure can be constructed in $O(n\log^{5/2}n)$ time.
\end{theorem}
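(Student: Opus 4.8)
The plan is to assemble the components developed earlier in this section. First I would build the suffix tree $\suffixtree$ of $T$ in $O(n\log n)$ time~\cite{Weiner73}, preprocess it for constant-time child navigation in $O(n(\log\log n)^2)$ time~\cite{Ruzic08}, and compute its heavy path decomposition in $O(n)$ time~\cite{Sleator83}, recording for every node the heavy path it lies on and its depth $d(\cdot)$ within that path. Then, using Lemma~\ref{lem:preprocess-algo}, I would compute the set $I_h$ of horizontal segments for every heavy path $h$ in $O(n\log n)$ total time; recall $\sum_h |I_h| = O(n\log n)$ by Lemma~\ref{lem:h-size} and $|I_h| \in O(n)$ for each $h$. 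For each $h$ I would then build the geometric data structure of Theorem~\ref{thm:geometricgapbounded} on $I_h$, treating the first component $i$ of each encoded consecutive occurrence $(i,j)$ as the segment's weight. This uses $O(|I_h|\log^{1+\epsilon}|I_h|)$ space, so the total space is $\sum_h O(|I_h|\log^{1+\epsilon}|I_h|) = O(n\log^{2+\epsilon} n)$, and it costs $O(|I_h|\log^{3/2}|I_h|)$ construction time, so the total over all heavy paths is $\sum_h O(|I_h|\log^{3/2}|I_h|) = O(n\log^{5/2} n)$, which dominates the remaining construction costs.

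For a query $(P,[a,b],[g_1,g_2])$ I would first locate $\locus(P)$, together with its heavy path $h$ and its depth $d(\locus(P))$ in $h$, by walking the suffix tree in $O(|P|)$ time. A pair $i<j$ of occurrences of $P$ forms a consecutive occurrence in $T[a..b]$ precisely when $(i,j)\in C(\locus(P))$, $a \le i$, and $j \le b-|P|+1$: the ``no occurrence strictly between'' condition is unaffected by restricting to the window, since any occurrence of $P$ strictly between $i$ and $j$ automatically lies within $[a,b]$ as well. Hence I would invoke the query of Definition~\ref{def:geometricgapbounded} on the structure for $h$ with $x = d(\locus(P))$, weight range $[a,\,b-|P|+1]$, and height range $[g_1,g_2]$, which returns, in non-decreasing order of distance, all candidate segments in $O(\log|I_h|\log\log|I_h| + \out) = O(\log n\log\log n + \out)$ time; I would then apply the two cleanup steps of the three-step procedure above, removing the at most one returned pair whose second component exceeds $b-|P|+1$, in $O(1)$ extra time. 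This already yields query time $O(|P| + \log n\log\log n + \out)$.

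To reduce the $\log n\log\log n$ term to $\log\log n$ I would, exactly as in Section~\ref{sec:topk}, additionally store at every node $v$ of $\suffixtree$ with $|\str(v)| < \log n\log\log n$ the sorted orthogonal range reporting structure of Lemma~\ref{lem:sortedrangereporting} over the point set $\{(i,\,j-i) : (i,j) \in C(v)\}$. Since on each level of $\suffixtree$ the subtree sizes sum to $n$, we have $\sum_{v:\,|\str(v)| < \log n\log\log n} |C(v)| = O(n\log n\log\log n)$, so these structures occupy $O(n\log^{1+\epsilon} n\log\log n)$ space and take $O(n\log^{3/2} n\log\log n)$ time to build, both subsumed by the bounds already obtained. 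For a short pattern ($|P| < \log n\log\log n$) I would answer the query directly through the structure at $\locus(P)$ in $O(|P| + \log\log n + \out)$ time as in Section~\ref{sec:naive}; for a long pattern ($|P| \ge \log n\log\log n$) the $\log n\log\log n$ term coming from the heavy-path query is absorbed into $|P|$. In all cases the query time is $O(|P| + \log\log n + \out)$ and the space and construction-time bounds are as claimed, which completes the proof.

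No genuinely hard step remains, since the heavy lifting is done by Theorem~\ref{thm:geometricgapbounded} and Lemmas~\ref{lem:h-size} and~\ref{lem:preprocess-algo}; the one point that needs care is the correctness of the geometric reduction — verifying that Definition~\ref{def:geometricgapbounded} faithfully captures consecutive occurrences inside the window, including the handling of the unique boundary pair whose right endpoint falls outside $[a,b]$ — and choosing the threshold $\log n\log\log n$ so that the auxiliary structures are simultaneously subsumed in space and absorbed in query time for long patterns.
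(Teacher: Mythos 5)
Your proposal is correct and follows essentially the same route as the paper: heavy-path decomposition, the segment sets $I_h$ computed via Lemma~\ref{lem:preprocess-algo}, one structure of Theorem~\ref{thm:geometricgapbounded} per heavy path with weight $i$ and height $j-i$, plus the auxiliary structures of Section~\ref{sec:naive} at nodes $v$ with $|\str(v)| < \log n\log\log n$ for short patterns, with the same space, query-time, and construction accounting. The only nit is that identifying the single reported pair with $j > b-|P|+1$ requires scanning the output (the results are sorted by distance, not by $i$), so this cleanup costs $O(\out)$ rather than $O(1)$ time, which does not affect the claimed bounds.
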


\section{Relation Between Consecutive Occurrences and Closed Substrings}\label{sec:closed_substrings}

In this section, we discuss a relation between consecutive occurrences of patterns and the notion of \emph{closed strings}, which is proposed by Fici~\cite{Fici17}.
A string is \emph{closed} if it is of length one or its longest border occurs exactly twice in the string. We observe that each consecutive occurrence of a pattern in a text corresponds to a closed substring of the text, and vice versa.
Let $(i, j)$ be a consecutive occurrence of pattern $P$ in the text $T[1..n]$, then $T[i..j+|P|-1]$ is a closed substring with border $T[i..i+|P|-1]$.
The maximum number of closed substrings in a length-$n$ string is $O(n^2)$ and this bound is tight~\cite{BadkobehFL15,ParshinaP24}.
Very recently, Jain and Mhaskar~\cite{Jain25} presented an $O(n \log n)$-time algorithm to compute a compact representation of all closed substrings of a string of length $n$ that uses only $O(n\log n)$ space,
which is based on the notion of \emph{maximal closed substrings}~\cite{Badkobeh0FP22}.
In this section, we give an alternative, indirect representation of the closed substrings, which also occupies $O(n\log n)$ space and propose an algorithm to compute it.

Lemma~\ref{obs:life-co} is about the presence of a consecutive occurrence on the nodes of a heavy path. More precisely, it states that in a heavy path, the nodes at which a consecutive occurrence is present forms of a subpath. The following lemma extends this result.
\begin{lemma}\label{lem:cons-occ-st}
  The nodes in the suffix tree $\mathcal{T}$ at which a consecutive occurrence $(i, j)$ is present constitutes an ancestor-descendant path.
\end{lemma}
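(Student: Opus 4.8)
The plan is to show that the set of nodes at which $(i,j)$ is present is closed under taking the least common ancestor (it is automatically ``downward connected'' along each root-to-leaf path by Lemma~\ref{obs:life-co}), and then to conclude that it forms a single ancestor--descendant path. First I would observe that ``$(i,j)$ is present at $v$'' means exactly two things: (a) both $\ell_i$ and $\ell_j$ lie in the subtree rooted at $v$, equivalently $v$ is an ancestor of $\mathrm{lca}(\ell_i,\ell_j)$; and (b) no leaf $\ell_k$ with $i<k<j$ lies in the subtree of $v$, equivalently $v$ is \emph{not} an ancestor of $\mathrm{lca}(\ell_i,\ell_k)$ for any such $k$. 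Condition (a) already tells us every node at which $(i,j)$ is present lies on the single root-to-$\mathrm{lca}(\ell_i,\ell_j)$ path, which is itself an ancestor--descendant path; so the presence-set is a subset of a path.

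Next I would show the presence-set is an \emph{interval} of that path, i.e. downward-closed up to some node and upward-closed up to the root-portion where condition (b) still holds. Let $z = \mathrm{lca}(\ell_i,\ell_j)$. For any node $v$ on the root-to-$z$ path, condition (a) holds. Moving down this path, condition (b) can only switch from true to false once: as we descend, the set of leaves in the subtree shrinks, so once some offending leaf $\ell_k$ ($i<k<j$) has branched off, it stays off; conversely an offending leaf present at a descendant is also present at an ancestor. Hence there is a highest node $u$ on the root-to-$z$ path at which condition (b) first becomes true, and (b) holds precisely on the subpath from $u$ down to $z$. Combined with (a) holding on all of root-to-$z$, the presence-set is exactly the subpath from $u$ to $z$, which is an ancestor--descendant path. (This reproves Lemma~\ref{obs:life-co} as the special case where one intersects with a single heavy path.)

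The main obstacle — really the only thing needing care — is verifying the monotonicity of condition (b) along the root-to-$z$ path, i.e. that the ``bad'' leaves $\ell_k$ leave the subtree monotonically as we descend and never re-enter. This follows immediately from the basic suffix-tree fact that the leaf-set of a subtree is a subset of the leaf-set of any ancestor's subtree, so I would state it as a one-line observation rather than belabor it. I would also note explicitly that the presence-set is nonempty precisely when $(i,j)\in\cons(P,T)$ for some $P$ (namely $P = \str(u)$ or any proper extension up to $\str(z)$), so the statement is vacuous otherwise; and that the endpoints $u$ and $z$ of the path admit the clean descriptions $z=\mathrm{lca}(\ell_i,\ell_j)$ and $u$ the shallowest ancestor of $z$ whose subtree contains no $\ell_k$ with $i<k<j$, which is convenient for the construction algorithm and for the closed-substring application that follows.
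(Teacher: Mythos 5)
Your proof is correct and takes essentially the same route as the paper: confine the set of nodes where $(i,j)$ is present to a single root-to-node path (the paper uses the root-to-$\ell_i$ path and re-invokes the argument of Lemma~\ref{obs:life-co}, you use the root-to-$\mathrm{lca}(\ell_i,\ell_j)$ path with explicit endpoint descriptions) and then use the fact that leaves only disappear as one descends to get contiguity. One cosmetic slip: ``condition (b) can only switch from true to false'' has the direction reversed---(b) switches from false to true as you descend---but your justification and the final characterization of the subpath from $u$ down to $z$ are stated correctly.
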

\begin{proof}
  Let $\pi_i$ be the path from the root to leaf $i$.	Since all consecutive occurrences involving index $i$ are present only on the nodes of $\pi_i$, the node $w$ must lie on the path $\pi_i$. By the same argument used in the proof of Lemma~\ref{obs:life-co}, it follows that the nodes of path $\pi_i$ at which consecutive occurrence $(i, j)$ is present form a subpath of $\pi_i$. This completes the proof.
\end{proof}

We denote by $\Pi_{i,j}$ the maximal ancestor-descendant path in the suffix tree $\mathcal{T}$ that corresponds to a consecutive occurrence $(i, j)$. Let $u$ and $v$ be the starting and ending nodes of $\Pi_{i,j}$, respectively.
Note that for each substring $P$ which is a prefix of $\str(v)$ and has $\str(u)$ as a prefix, the pair $(i, j)$ is a consecutive occurrence of that substring. Equivalently, $T[i..j+|P|-1]$ is a closed substring of the underlying string $T$. So, the collection of all such paths compactly encode all the closed substrings of the string $T$. The number of all such ancestor-descendant paths would be $O(n\log n)$, as the number of consecutive occurrences is bounded above by $O(n\log n)$. 

We present an algorithm that efficiently computes all these paths $\Pi_{i,j}$. The algorithm uses the method for computing sets $I_h$ of horizontal segments as a subroutine. The idea is to compute the set $I_h$ for each heavy path $h$, and suitably combine the paths (corresponding to the horizontal segments) to obtain the paths $\Pi_{i,j}$. Thus, we get the following.
\begin{lemma}\label{lem:cons-occ-st-comp}
  Given the suffix tree $\mathcal{T}$ for a string, we can compute the path $\Pi_{i,j}$ for every consecutive occurrence $(i, j)$ present in the suffix tree $\mathcal{T}$ in $O(n\log n)$ time and space.
\end{lemma}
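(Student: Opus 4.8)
The plan is to reuse the segment-construction algorithm of Lemma~\ref{lem:preprocess-algo} and then \emph{stitch together}, across heavy paths, the subpaths it produces. First I would compute the set $I_h$ of horizontal segments for every heavy path $h$. Recall that a segment $[d(u),d(w)]\times(j-i)\in I_h$ carrying satellite data $(i,j)$ encodes the maximal subpath $\Pi_{i,j}^h$ of $h$ along which $(i,j)$ is present, with $u$ and $w$ the nodes of $h$ at heavy-path depths $d(u)$ and $d(w)$. So that these geometric coordinates can be turned back into actual nodes of $\mathcal{T}$, I would also keep, for each heavy path, the array listing its nodes in order of depth (this is $O(n)$ space in total, since the heavy paths partition the $O(n)$ nodes of $\mathcal{T}$), and precompute the depth in $\mathcal{T}$ of every node by one traversal in $O(n)$ time.

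Next I would combine the subpaths. By Lemma~\ref{lem:cons-occ-st} the nodes of $\mathcal{T}$ at which a fixed $(i,j)$ is present form an ancestor--descendant path $\Pi_{i,j}$; this path crosses at most $\log_2 n$ heavy paths, and on each such heavy path $h$ its trace is exactly the subpath $\Pi_{i,j}^h$ recorded in $I_h$. Thus $\Pi_{i,j}$ is the concatenation of these subpaths. To assemble it, I would collect from all the sets $I_h$ one record per segment, namely the triple $(i,j,(x,y))$ where $x$ and $y$ are the top and bottom nodes of the encoded subpath, and radix-sort all these records by the key $(i,j)$ so that the records of one consecutive occurrence become contiguous. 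Inside a group, the subpaths tile the contiguous ancestor--descendant path $\Pi_{i,j}$, along which the $\mathcal{T}$-depth is strictly increasing; hence the topmost node of $\Pi_{i,j}$ is the group's subpath endpoint of minimum $\mathcal{T}$-depth and the bottommost node is the one of maximum $\mathcal{T}$-depth. A single scan of each group therefore outputs the two endpoints, which determine $\Pi_{i,j}$ completely. (If one wants the explicit top-down sequence of subpaths instead, the same scan can link each subpath to its predecessor, using that a subpath whose top node is $\apex(h)$ is immediately preceded by the unique subpath of the group whose bottom node is the parent of $\apex(h)$; matching these pairs costs an extra sort within the group.)

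For the running time: computing all the $I_h$ costs $O(n\log n)$ time and $O(n)$ working space (Lemma~\ref{lem:preprocess-algo}), and the node-depth and per-path node arrays add only $O(n)$. By Lemma~\ref{lem:h-size} there are $O(n\log n)$ segments in total, hence that many records; their keys lie in $[1,n]^2$, so two bucket-sort passes sort them in $O(n\log n)$ time and space, and the final left-to-right scan is also $O(n\log n)$. The main obstacle I anticipate is the combination step --- in particular, arguing cleanly that the per-heavy-path traces of one $(i,j)$ really do tile a single contiguous ancestor--descendant path (so that the extreme-depth endpoints are the correct answer and no spurious gaps appear), and keeping the grouping of the $O(n\log n)$ subpaths within the stated budget, which is exactly why the keys' bounded integer range and radix sort are used rather than comparison sorting.
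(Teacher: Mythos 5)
Your proposal is correct and follows essentially the same route as the paper's proof: compute the sets $I_h$ via the algorithm of Lemma~\ref{lem:preprocess-algo}, radix-sort the $O(n\log n)$ segments by their associated consecutive occurrence $(i,j)$ so that each group is contiguous, and then concatenate (in your case, take depth-extreme endpoints of) the per-heavy-path subpaths, which by Lemma~\ref{lem:cons-occ-st} tile the single ancestor--descendant path $\Pi_{i,j}$. Your extra bookkeeping (per-heavy-path node arrays and tree depths to translate heavy-path coordinates back into nodes) only fills in details the paper leaves implicit, and the $O(n\log n)$ time and space analysis matches.
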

\begin{proof}
  We first compute the set $I_h$ for each heavy path in the decomposition, using our proposed algorithm.
  Let $\mathcal{I}=\bigcup_h I_h$. We then sort the segments in the set $\mathcal{I}$ by their associated consecutive occurrences using the radix sort algorithm. In the sorted list, the segments corresponding to a consecutive occurrence would be contiguous. On concatenating the subpaths (of heavy paths) corresponding to the segments of each consecutive occurrence $(i, j)$, we get the path $\Pi_{i,j}$ for the pair $(i, j)$ in the suffix tree $\mathcal{T}$.

  The algorithm for computing the set $\mathcal{I}$ of all horizontal segments takes $O(n\log n)$ time by Lemma~\ref{lem:preprocess-algo}.
  The radix sort for the set $\mathcal{I}$ would take $O(n + |\mathcal{I}|)= O(n\log n)$ time. The last step of concatenating subpaths for every consecutive occurrence can be done in overall $O(|\mathcal{I}|) = O(n\log n)$ time.
  Therefore, the total time is $O(n\log n)$.
\end{proof}

To summarize, we obtain the following:
\begin{proposition}\label{prop:closed}
  The set of closed substrings of a string $T$ of length $n$
  can be implicitly represented as $O(n \log n)$ paths
  over the suffix tree of $T$.
  Given string $T$, we can compute the representation in $O(n \log n)$ time.
\end{proposition}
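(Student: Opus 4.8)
The plan is to assemble the proposition from the three ingredients already developed in this section: the correspondence between consecutive occurrences and closed substrings, the structural Lemma~\ref{lem:cons-occ-st}, and the construction Lemma~\ref{lem:cons-occ-st-comp}. First I would make the correspondence precise. For a consecutive occurrence $(i,j)$ of a pattern $P$, the substring $s := T[i.. j+|P|-1]$ has $P$ as both a prefix and a suffix, and $P$ occurs in $s$ \emph{exactly} twice: an occurrence of $P$ in $s$ starting at an internal position would be an occurrence of $P$ in $T$ strictly between $i$ and $j$, contradicting consecutiveness. Since $|P| < |s|$, the longest border $\beta$ of $s$ satisfies $|\beta| \ge |P|$, and because $P$ is the (unique) length-$|P|$ prefix of $s$, every occurrence of $\beta$ in $s$ begins with an occurrence of $P$; hence $\beta$ occurs at most twice, and at least twice (at the prefix and suffix positions, which differ as $|\beta|<|s|$), so $s$ is closed. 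Conversely, given a closed substring $s = T[p.. q]$ of length at least two with longest border $\beta$, setting $P := \beta$, $i := p$, $j := q-|\beta|+1$ yields a consecutive occurrence $(i,j)$ of $P$ with $s = T[i.. j+|P|-1]$, since the only two occurrences of $\beta$ in $s$ are the prefix and suffix ones. The $n$ length-one substrings (and the unique last character of $T$, which occurs only once) are represented trivially and do not affect the bound.

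Next I would translate this into the suffix tree $\suffixtree$. By Lemma~\ref{lem:cons-occ-st}, the nodes at which a fixed consecutive occurrence $(i,j)$ is present form a maximal ancestor-descendant path $\Pi_{i,j}$ with endpoints $u$ and $v$; the patterns $P$ having $(i,j)$ as a consecutive occurrence are precisely the strings $w$ with $\str(u)$ a prefix of $w$ and $w$ a prefix of $\str(v)$ along that path. Hence $\Pi_{i,j}$, viewed together with the depth interval it delimits, encodes exactly the family $\{\,T[i.. j+|w|-1]\,\}$ of closed substrings associated with the pair $(i,j)$, and the collection $\{\Pi_{i,j}\}$ over all consecutive occurrences is a faithful implicit representation of all closed substrings of $T$ of length at least two.

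For the size bound, I would observe that each consecutive occurrence $(i,j)$ contributes at least one horizontal segment to $\bigcup_h I_h$ (the path $\Pi_{i,j}$ breaks into heavy-path subpaths, each yielding a segment), so the number of distinct pairs $(i,j)$ is at most $\sum_h |I_h| = O(n\log n)$ by Lemma~\ref{lem:h-size}; adding the $n$ trivial length-one paths keeps the count $O(n\log n)$. For the running time I would invoke Lemma~\ref{lem:cons-occ-st-comp} directly: compute all sets $I_h$ in $O(n\log n)$ time via Lemma~\ref{lem:preprocess-algo}, radix-sort the resulting $O(n\log n)$ segments by their satellite consecutive-occurrence data so that segments sharing a pair $(i,j)$ become contiguous, and concatenate the associated heavy-path subpaths into $\Pi_{i,j}$, all within $O(n\log n)$ time and space.

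I expect the only non-routine step to be the first one: verifying that $T[i.. j+|P|-1]$ is genuinely \emph{closed} rather than merely bordered by $P$, i.e., that the \emph{longest} border of that substring occurs exactly twice, together with the symmetric check in the converse direction and the bookkeeping for the length-one cases. Everything after that is a direct assembly of Lemmas~\ref{lem:cons-occ-st}, \ref{lem:h-size}, and \ref{lem:cons-occ-st-comp}.
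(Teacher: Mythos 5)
Your proposal is correct and follows essentially the same route as the paper: the consecutive-occurrence/closed-substring correspondence, Lemma~\ref{lem:cons-occ-st} for the ancestor-descendant paths $\Pi_{i,j}$, the $O(n\log n)$ bound via the segment count, and Lemma~\ref{lem:cons-occ-st-comp} for the construction. Your added verification that the longest border of $T[i..\,j+|P|-1]$ occurs exactly twice (and the converse) just fills in details the paper states as an observation.
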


Note that the asymptotic complexities in Proposition~\ref{prop:closed} are the same as those of the methods proposed in~\cite{Badkobeh0FP22} and~\cite{Jain25}.
However, our approach is based on textbook algorithms and admits a straightforward complexity analysis, whereas both existing methods require advanced dynamic data structures~\cite{BrownT79,KociumakaPRRW15}.

\section{Conclusions and Future Work} \label{sec:conclusions}
In this paper, we introduced range variants of consecutive occurrence queries,
generalizing the problems studied in previous work~\cite{NavarroT16,Bille22}.
Specifically, we proposed:
\begin{itemize}
  \item A data structure of size $O(n \log^2 n)$ that answers the range top-$k$ consecutive occurrence query in $O(|P| + \log\log n + k)$ time.
    Its construction time is $O(n\log^3 n)$.
  \item A data structure of size $O(n \log^{2+\epsilon} n)$ that answers the range gap-bounded consecutive occurrence query in $O(|P| + \log\log n + \out)$ time, where $\epsilon$ is a positive constant and $\out$ denotes the number of outputs.
    The structure can be built in $O(n\log^{5/2} n)$ time.
\end{itemize}
Further, we presented efficient algorithms for geometric problems involving weighted horizontal segments in a 2D plane, which appeared as subproblems in our approach and are of independent interest.
Additionally, we noticed that the notions of consecutive occurrences and closed substrings are related. Using this relation, we provided an indirect compact representation of all closed substrings in a string, which, in a sense, is equivalent to the one recently proposed by Jain and Mhaskar~\cite{Jain25}. We also
presented an $O(n\log n)$ time algorithm to compute the representation.

Our data structures require $\Omega(n \log^2 n)$ space in the worst case, whereas existing solutions for non-range variants use only $O(n \log n)$ space.
Thus, reducing the space complexity of our data structures is an interesting direction for future research.
Another promising direction for future work is to explore alternative time-space trade-offs for the problems under consideration, with a focus on achieving optimal query time and/or (near) linear space.

\end{document}